\renewcommand{\P}{\mathcal{P}}
\renewcommand{\setminus}{-}
\theoremstyle{thmstyleone}
\newtheorem{theorem}{Theorem}
\newtheorem{corollary}{Corollary}
\newtheorem{lemma}{Lemma}
\title{Path Partitions of Phylogenetic Networks}
\author*[1]{\fnm{Manuel} \sur{Lafond}}\email{manuel.lafond@USherbrooke.ca}
\author*[2]{\fnm{Vincent} \sur{Moulton}}\email{v.moulton@uea.ac.uk}
\affil*[1]{\orgdiv{Department of Computer Science}, \orgname{Université de Sherbrooke}, \orgaddress{\country{Canada}}}
\affil*[2]{\orgdiv{School of Computing Sciences}, \orgname{University of East Anglia}, \orgaddress{Norwich, NR4 7TJ, \country{United Kingdom}}}
\abstract{
In phylogenetics, evolution is traditionally represented in a tree-like manner.  However, phylogenetic networks can be 
more appropriate for representing evolutionary events such as hybridization, horizontal gene transfer, and others.  
In particular, the class of forest-based networks was recently introduced to 
represent introgression, in which genes are swapped between species.  
A network is forest-based if it can be obtained by adding arcs to 
a collection of trees, so that the endpoints of the
new arcs are in different trees.
This contrasts with so-called tree-based networks, which are formed by adding arcs 
within a single tree.  

We are interested in the computational complexity of recognizing forest-based networks, which 
was recently left as an open problem by Huber et al.  It has been observed that forest-based networks coincide 
with directed acyclic graphs that can be partitioned into induced paths, each 
ending at a leaf of the original graph.  Several types of path partitions 
have been studied in the graph theory literature, but to our 
best knowledge this type of `leaf induced path partition' has 
not been directly considered before. The study of forest-based networks in terms of these 
partitions allows us to establish closer relationships between phylogenetics and algorithmic 
graph theory, and to provide answers to problems in both fields.

More specifically, we show that deciding whether a network is forest-based is NP-complete, even on input networks that are tree-based, binary, and have only three leaves.  
This shows that partitioning a directed acyclic graph into a constant number of 
induced paths is NP-complete, answering a recent question of Fernau et al.  
We then show that the problem is polynomial-time solvable on binary networks with 
two leaves and on the recently introduced class of orchards, which 
we show to be always forest-based.  Finally, for undirected graphs, 
we introduce unrooted forest-based networks and provide 
hardness results for this class as well.
}
\keywords{Phylogenetic networks, tree-based, forest-based, path partitions, Monotone NAE-3-SAT}
\begin{document}

\maketitle

\section{Introduction}

Recently, there has been growing interest
in using networks in addition to rooted trees 
to represent evolutionary histories of species \cite{kong2022classes}.
Formally, a \emph{network} is a connected, directed acyclic graph (DAG) $N$ 
in which the set $L(N)$ of sinks or \emph{leaf set}  
corresponds to a collection of species.
Much work to date has focused on networks having single root,
although recent work has also considered networks that have multiple roots \cite{scholz2019osf}.
Networks are commonly used to model the evolution of species which 
undergo various forms of \emph{reticulate evolution} \cite{sneath1975cladistic} (e.g. where
species come together or hybridize to form a new species), and 
several classes of networks have been defined in recent years that have been intensively studied in
the literature (see e.g. \cite{kong2022classes} for recent survey).

Networks that have a single source or root 
are usually called \emph{phylogenetic} networks, 
and much of the work on these has 
focused more specifically on \emph{binary} phylogenetic networks, in which all of the leaves 
have indegree one, the root has outdegree 2, and all other 
vertices have total degree 3 (see e.g. \cite[Chapter 10]{steel2016phylogeny}).
Note that phylogenetic networks whose underlying undirected graph is a
tree are called \emph{phylogenetic trees}, which are 
better known as the evolutionary trees that often appear in biology textbooks.
One special class of phylogenetic networks that has recently received
considerable attention are the \emph{tree-based} networks \cite{francis2015phylogenetic,jetten2016nonbinary}.
These are essentially phylogenetic networks that can be
formed by adding a collection of arcs to a phylogenetic tree. 

Binary tree-based phylogenetic networks have several characterizations, one of which is as follows.
Given a directed acyclic graph $N$, we define a \emph{leaf path partition} 
of $N$ to be a collection of directed paths in $N$ that partition
the vertex set of $N$ and such that each path ends in a leaf (or sink) of $N$.
In \cite[Theorem 2.1]{francis2018new}, it is shown that a binary phylogenetic
network is tree-based if and only if it admits a leaf path partition.
More recently, it has been noted that leaf path partitions
also naturally arise when considering the 
closely related class of \emph{forest-based networks} \cite{huber2022forest}.
These are networks which can be formed by adding
arcs to a {\em collection} of phylogenetic trees, or \emph{phylogenetic forest},
so that each added arc has its end vertices in different trees, and 
the network so obtained is connected. In \cite{huber2022forest}
it is shown that a network is forest-based if and only
if it admits an \emph{induced} leaf path partition, that 
is, a path partition in which each directed path is an induced path.

By exploiting
leaf path partitions, we shall focus on answering some 
complexity problems concerning forest-based 
and other closely related networks. 
Note that, due in part to 
their various applications in mathematics
and computer science, path partitions of graphs have been extensively studied 
(see e.g. \cite{manuel2018revisiting}), and 
they remain a topic of current interest. 
For example, in 
the recent paper \cite{fernau2023parameterizing} the complexity of several
path partition problems of digraphs, as well in graph in general, are surveyed and determined. 
However, although path partitions of 
directed acyclic graphs in which each path contains 
one vertex from a fixed subset of the vertex
set have been considered (see e.g. \cite{sambinelli2022alpha} and the
references therein where they are called $S$-path partitions for a
fixed subset $S$ of the vertex set),
to our best knowledge the concept of leaf path partitions
for general directed acyclic graphs appears to be new.

The main problem that we will consider in this
paper is determining the complexity of deciding whether a DAG 
admits a leaf induced path partition (leaf IPP for short), as
well as the closely related problem of deciding whether a network is forest-based or not.
In \cite{huber2023network}, using graph colourings, it is shown that 
it is NP-complete to decide if a binary, tree-child\footnote{A network
is tree-child if each
non-leaf vertex has at least one child with indegree 1.} 
network $N$ with a fixed number of roots $k\ge 3$ 
is \emph{proper forest-based}, 
that is, if $N$ can be constructed from a phylogenetic 
forest with $k$ components as described above. 
However, the problem of deciding if a network $N$ is forest-based is left 
as an open problem. Also very closely related is the recent work 
of Fernau et al.~\cite{fernau2023parameterizing},
where amongst other results they show that deciding whether a binary planar DAG can be partitioned into at most $k$ 
induced paths, for given $k$, is NP-complete, and also that this problem is $W[1]$-hard 
on DAGs for parameter $k$.  Among their open questions, they ask whether the 
problem is in XP for parameter $k$.  In other words, they ask whether 
it is possible to achieve time complexity $n^{f(k)}$, which would be polynomial if $k$ is a constant.

In this paper, we shall show that deciding whether a network $N$ is forest-based is NP-complete even 
in case $N$ is a binary, tree-based phylogenetic network.
A key component in our proof is to show
that it is NP-complete to decide if a directed acyclic graph
with three roots and three leaves admits a leaf IPP, 
which we do by 
reducing from \textsc{Monotone NAE-3-SAT} \cite{dehghan2015complexity}.
This implies that it is NP-complete to decide whether a binary DAG can be split into $k = 3$ induced paths, 
which thus also answers the question from~\cite{fernau2023parameterizing}
mentioned in the last paragraph on XP membership.  Our reduction produces networks 
of linear size, which also implies that under the \emph{Exponential Time Hypothesis} (ETH), no 
sub-exponential time algorithm is possible for the forest-based recognition problem on three leaves.  
Recall that the ETH states that, in particular, 3-SAT cannot be solved in time $2^{o(n + m)} n^c$, 
with $n, m$ the number of variables and clauses, respectively, and $c$ is any 
constant~\cite{impagliazzo2001problems}, and that the lower bound 
applies to \textsc{Monotone NAE-3-SAT}~\cite{antony2024switching}.

We also show that one can decide in polynomial time whether a binary DAG can be split 
into $k = 2$ induced paths, by a reduction to 2-SAT.  The case of $k = 2$ and non-binary DAGs remains open. 
As an additional positive result, we show that all the networks that belong to the 
well-known class of \emph{orchards} are forest-based~\cite{erdHos2019class,van2022orchard}.  
Orchards are networks that are consistent in time and can be reduced to a 
single leaf through so-called \emph{cherry picking} operations, and have several 
applications, including the development of novel metric spaces on 
networks~\cite{cardona2024comparison,landry2023fixed} and allowing simple algorithms for 
isomorphism and network containment~\cite{janssen2021cherry}.

Before proceeding, for completeness we mention some further problems related to finding 
leaf IPPs. In~\cite{van2017fixed} 
the problem of removing arcs from a DAG so that every connected component contains exactly one leaf is considered. 
In addition, the problem of finding $k$ vertex-disjoint paths between specified start and 
end vertex pairs $(s_1, t_1), \ldots, (s_k, t_k)$, without necessarily covering 
all vertices, has received considerable attention in both the induced and non-induced settings.  
In DAGs, this is polynomial-time solvable if $k$ is fixed~\cite{fortune1980directed} (see also~\cite{tholey2012linear} 
for a linear-time algorithm when $k=2$), but the problem is NP-complete on DAGs already when $k=2$ 
if the paths are required to be induced~\cite{kawarabayashi2008induced}.  The induced 
paths version is polynomial-time solvable on directed planar graphs~\cite{kawarabayashi2008induced} 
for fixed $k$, and when $k$ is treated as a parameter, finding $k$ edge-disjoint paths is 
$W[1]$-hard~\cite{slivkins2010parameterized}, meaning that there is probably no 
algorithm with time complexity of the form $f(k) \cdot n^c$, for some function $f$ and constant $c$.
Also see~\cite{berczi2017directed,lopes2022relaxation} for other types of 
directed disjoint path problems and analogous results on undirected graphs.  

We now give a summary of the contents of the rest of the paper.  Section~\ref{sec:prelim} introduces the 
preliminary notions on phylogenetic networks and related structures, 
while Section~\ref{sec:forest-based} formally introduces forest-based networks and 
their variants, along with their correspondence with path partitions.
In Section~\ref{sec:hardness}, we show that it is NP-complete to partition a 
binary DAG into three induced paths, implying that the forest-based recognition problem 
is hard even on networks with three leaves.  Section~\ref{sec:tractable} focuses on 
tractable instances, where we show that the analogous problem is polynomial-time 
solvable on binary networks with two leaves, and on orchards.  In Section~\ref{sec:unrooted}, 
we introduce \emph{unrooted} forest-based networks, and show that they are hard to 
recognize even on networks with four leaves.  We conclude with a brief discussion and
presenting some open problems.

\section{Preliminaries}\label{sec:prelim}

In this section, we present some terminology for graphs that we will use in this paper,
most of which is standard in graph theory and phylogenetics (see e.g. \cite[Chapter 10]{steel2016phylogeny}).
For a positive integer $n$, we use the notation $[n] = \{1, 2, \ldots, n\}$.  
Let $N$ be a directed acyclic graph (DAG).
Denote its vertex-set by $V(N)$ and its arc-set by $A(N)$. 
If $(u, v) \in A(N)$ is an arc, then $u$ is an \emph{in-neighbor} of $v$ and $v$ an \emph{out-neighbor} of $u$.  The \emph{indegree} and \emph{outdegree} of a vertex are its number of in-neighbors and out-neighbors, respectively.  We say that $v \in V(N)$ is a \emph{root} of $N$ if $v$ has indegree $0$, and a \emph{leaf} of $N$ if it has outdegree $0$.  Note that roots and leaves are sometimes called sources and sinks, respectively.  We denote by $R(N)$ the set of roots of $N$ and by $L(N)$ its set of leaves. 
A vertex of $V(N) \setminus L(N)$ is called an \emph{internal vertex}. If an internal vertex $v$ has indegree at least $2$, then $v$ is called a 
\emph{reticulation}, and otherwise $v$ is a \emph{tree-vertex}.
A vertex of indegree $1$ and outdegree $1$ is called a \emph{subdivision vertex}.
A DAG is \emph{semi-binary} if every root has outdegree 2, every internal vertex has total degree 2 or 3, and every 
leaf vertex has indegree 1 or 2; it is called \emph{binary} if it is semi-binary and contains no subdivision vertex (that
is every internal vertex has total degree 3).

Suppose that $N$ is a DAG. Although this is standard notation, 
due to its importance we note that a \emph{directed path} in $N$ is a 
sequence of distinct vertices $v_1,v_2,\dots, v_k$, $k \ge 1$, in $V(N)$ such that
$(v_i,v_{i+1}) \in A(N)$ for all $i \in [k-1]$.  
If $(v_i, v_j) \notin A$ for any $j>i+1$ also holds, then the sequence forms an \emph{induced path}.
 Note that
abusing notation we shall sometimes also consider such a path or induced path
as just being its set of vertices $P = \{v_1, \ldots, v_k\}$, from which the ordering of the sequence can be inferred.
For $B \subseteq V(N)$, we write $N[B]$ 
for the directed subgraph of $N$ induced by $B$. 
We say that $N$ is \emph{connected} if the underlying undirected graph of $N$ is connected
(that is, there is an undirected path that connects any pair of its vertices),
and a connected component of $N$ is the subgraph of $N$ induced by the vertices of a connected component of its underlying undirected graph.
A \emph{tree} is a connected DAG with a single root and no reticulation vertex.  
A \emph{forest} is a DAG in which every connected component is a tree.

A \emph{path partition} of $N$ is a collection $\P$ of vertex-disjoint 
directed paths in $N$ whose union is $V(N)$ (using our notation, each element of $\P$ is a set of vertices); it 
is called an \emph{induced path partition} if every path in the
partition is an induced directed path in $N$. 
We may write PP for path partition and IPP for induced path partition.
A (induced) path partition is called
a \emph{leaf (induced) path partition} if every path in the partition 
ends in a leaf of $N$. 
Note that since the
paths partition $N$, it follows that in a leaf PP or a leaf IPP, each leaf of $N$ must be the end of some path. 
Observing that no two leaves are in the same path, a leaf PP or leaf IPP, if it exists, partitions $N$ into the smallest possible number of paths.
Also note that, as stated in the introduction, path partitions in 
graphs that contain a specified subset of vertices have been 
studied in the literature; see e.g. \cite[Section 3.2]{manuel2018revisiting}.


We say that a 
DAG $N$ with at least two vertices is a \emph{network} if: $N$ is connected; 
every root has outdegree at least $2$; 
every leaf has indegree $1$; 
every reticulation has outdegree $1$; 
and $N$ has no subdivision vertex.  
If $|V(N)| = 1$, then $N$ is a network  and $R(N) = L(N)$. 
If a $N$ is a tree, then $N$ is a \emph{phylogenetic tree}, and 
if it is a forest it is a \emph{phylogenetic forest}.
In addition if $N$ has a single root it is called a 
\emph{phylogenetic network}\footnote{In phylogenetics, it is common
to call $N$ a phylogenetic network \emph{on $X$}, where $X=L(N)$. But
since the labels of the leaves is not 
important in our arguments, we shall not follow this convention in this paper.}.

\section{Forest-based DAGs}\label{sec:forest-based}

In this section, we consider forest-based networks and some of their properties.
We call a DAG $N=(V,A)$  {\em weakly forest-based} if there exists $A' \subseteq A$ 
such that $F'=(V,A')$ is a forest with leaf set $L(N)$.
If in addition every arc in $A \setminus A'$ 
has its two endpoints in different trees of $F'$ (i.e. $F'$ is an induced forest in $N$),
then we call $N$ {\em forest-based}. 
If $N$ is (weakly) forest-based relative to some spanning forest $F$, 
we call $F$ a {\em subdivision forest (of $N$)}. These
definitions generalise the definition of a forest-based network
presented in \cite{huber2022forest}.

Forest-based networks
were first introduced in \cite{huber2022forest} as a generalization
of so-called \emph{overlaid species forests} \cite{huber2022overlaid}, and can 
be used to analyze an evolutionary process called \emph{introgression} (see
more about this evolutionary process in \cite{scholz2019osf}).
Note that as mentioned in the introduction 
forest-based phylogenetic networks are closely related to tree-based
phylogenetic networks. In particular, 
a binary phylogenetic network $N$ is \emph{tree-based} if it
contains a rooted spanning tree with leaf set $L(N)$ or,
equivalently, it admits a leaf path partition \cite[Theorem 2.1]{francis2018new}.
Thus, tree-based binary phylogenetic networks are weakly forest-based.

The following key result extends the above
stated relationships to DAGs. Its proof is almost identical 
to~\cite[Theorem 1]{huber2022forest}, but we include it for completeness.

\begin{theorem}\label{thm:fb-link}
Suppose that $N$ is a DAG. Then
\begin{itemize}
    \item[(i)] $N$ is weakly forest-based if 
and only if $N$ contains a leaf path partition.
\item[(ii)] $N$ is forest-based if 
and only if $N$ contains a leaf induced path partition.
\end{itemize}
\end{theorem}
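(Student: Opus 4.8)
The plan is to prove both equivalences by exhibiting an explicit correspondence between the spanning forest $F'$ witnessing the (weakly) forest-based property and the leaf (induced) path partition. The two directions of each biconditional will be handled by translating back and forth between these two objects, and since the forest-based case is the weakly forest-based case with an extra ``induced'' condition layered on top, I would prove part (ii) as a refinement of part (i) rather than from scratch.

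For part (i), first I would prove the forward direction. Suppose $N$ is weakly forest-based, so there is $A' \subseteq A$ with $F' = (V, A')$ a forest having leaf set $L(N)$. The key structural observation is that in a forest every vertex has indegree at most $1$; hence in $F'$ every non-leaf vertex has exactly one out-arc on any maximal directed path, and more usefully, if we additionally ensure every internal vertex of $F'$ has outdegree exactly $1$ then $F'$ decomposes canonically into directed paths, one per leaf. The cleanest way to set this up is to argue that we may choose $F'$ so that every vertex has outdegree at most $1$: starting from any subdivision forest, for any vertex of outdegree $\ge 2$ in $F'$ we delete all but one of its out-arcs (pushing them into $A \setminus A'$), which keeps $F'$ spanning and keeps $L(N)$ as its leaf set because deleting out-arcs cannot create new sinks among the former internal vertices so long as we retain one out-arc at each. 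With indegree $\le 1$ and outdegree $\le 1$ everywhere, the connected components of $F'$ are exactly directed paths, each terminating at a sink, i.e.\ at a leaf of $N$; these paths form the desired leaf path partition $\P$. Conversely, given a leaf path partition $\P$, I would let $A'$ be the set of all arcs $(v_i, v_{i+1})$ internal to the paths of $\P$. Because the paths are vertex-disjoint and cover $V(N)$, each vertex lies on exactly one path, so in $(V, A')$ every vertex has indegree $\le 1$ and outdegree $\le 1$, making it a forest; its sinks are precisely the path-endpoints, which are the leaves of $N$, so $F' = (V, A')$ is a subdivision forest and $N$ is weakly forest-based.

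For part (ii), I would reuse the construction from part (i) and track the ``induced'' condition on both sides. In the forward direction, assume $N$ is forest-based, so the chosen $F'$ is an \emph{induced} forest, meaning every arc of $A \setminus A'$ has its endpoints in different trees of $F'$. After the outdegree-reduction step I must check this property is preserved; the subtlety is that deleting an out-arc $(u, w)$ from $A'$ moves it into $A \setminus A'$, and I must verify $u$ and $w$ then lie in different trees. This holds because removing an out-arc from $u$ disconnects $w$'s subtree from $u$ within the forest, placing them in distinct components, so the induced property is maintained. Each resulting path $P \in \P$ is then an induced directed path: any chord $(v_i, v_j)$ with $j > i+1$ would be an arc of $N$ with both endpoints in the same tree of $F'$ yet not in $A'$, contradicting that $F'$ is induced. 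Conversely, from a leaf \emph{induced} path partition I build $A'$ exactly as before, and the fact that each path is chordless in $N$ guarantees that no arc of $A \setminus A'$ has both endpoints within a single path, i.e.\ within a single tree of $F'$, so $F'$ is an induced forest and $N$ is forest-based.

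The main obstacle, and the only place where care is genuinely needed, is the outdegree-reduction step in the forward directions: showing that one can always massage an arbitrary subdivision forest into one whose internal vertices each have outdegree exactly $1$ while simultaneously (a) keeping the leaf set equal to $L(N)$ and (b) in part (ii) preserving the induced condition. The risk is that trimming an out-arc could inadvertently turn an internal vertex into a new sink or could place two endpoints of a discarded arc back into the same tree. I would handle (a) by always retaining at least one out-arc at every non-leaf vertex, and (b) by the component-splitting argument above, noting that the endpoints of every newly discarded arc are separated in the trimmed forest. Once this normalization is in place, both equivalences follow by the direct bijection between path-internal arcs and forest arcs, and I expect the remaining verifications to be routine.
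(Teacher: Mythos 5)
Your proposal is correct and follows essentially the same route as the paper's proof: both directions rest on trimming a subdivision forest down to one whose vertices have outdegree at most one (yielding the paths), and conversely taking the path arcs as the forest. The only differences are cosmetic --- the paper iteratively trims the topmost branching vertex of a non-path component rather than trimming everywhere at once, and it proves (ii) first with (i) ``similar,'' whereas you are in fact more explicit than the paper about why trimming preserves the induced condition and the leaf set.
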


\begin{proof}
We prove the result for forest-based DAGs; the proof is
the similar for weakly forest-based DAGs.

Suppose that $N$ admits a leaf induced path partition, then it is clearly
forest-based (with subdivision forest a collection of induced directed paths).
	
Conversely, suppose that $N=(V,A)$ is forest-based, with subdivision forest $F'$. If every 
connected component in $F'$ is a directed path, then
the converse holds as these paths must be induced and must each end at a leaf. So, suppose this is not the case, and that there exists a connected component 
$T'$ in $F'$ that is not a path. Then, as $T'$ is 
a tree, there must be a vertex $v$ with outdegree at least 2 such that no 
ancestor of $v$ in $T'$ has outdegree greater than 1. By removing 
all but one of the arcs from $T'$ with tail $v$, we obtain a new 
subdivision forest of $N$, which has more components than $F'$. Repeating 
this process if necessary, we eventually end up with a subdivision forest of $N$ that 
consists of a collection of induced directed paths. So $N$ admits a leaf induced path partition. 
\end{proof}

Using the link with path partitions
given in Theorem~\ref{thm:fb-link}, we shall show in the next section that it is NP-complete 
to decide whether a DAG $N$ is forest-based, 
even if $N$ is a binary, weakly forest-based network with three leaves.
In contrast, again using this link, we shall now explain why 
it is possible to decide whether a DAG $N$ is 
weakly forest-based in polynomial time in $|V(N)|$.
This is essentially proven in
\cite{francis2018new} for the special case that 
$N$ is a binary phylogenetic network, but we present the main ideas 
in the proof for DAGs for the reader's convenience.

For a DAG $N = (V, A)$, let $d(N)$ be the smallest number of vertex-disjoint 
paths that partition the vertex set of $N$. This number is closely
related to the size of a maximum matching in the following undirected bipartite graph $G(N)$
associated to $N$. The vertex bipartition of $G(N)$ is $\{V_1,V_2\}$, where
$V_1$ and $V_2$ are copies of $V$, and the edge set of $G(N)$ consists of
those $\{u,v\}$ with $u \in V_1$ and $v \in V_2$ such that there
is an arc $(u,v)$ in $A$. The proof of the following result is
more-or-less identical to that of \cite[Lemma~4.1]{francis2018new}\footnote{In
the statement of that lemma take $\mathcal N$ to be a DAG $N$ 
with leaf-set $X=L(N)$, $p(N)=d(N)-|X|$, and $u(\mathcal G_{\mathcal N})$ 
to be the number of unmatched vertices in $V_1$ relative to a maximum-sized matching of $G(N)$.}, 
and so we shall not repeat it here. Note that 
it can also be shown using \cite[Problem 26-2]{cormen2022introduction}, which 
yields essentially the same proof.
The main idea is that the matched vertices of $V_1$ can have their partner vertex from $V_2$ 
as their successor in a path, whereas the unmatched vertices consist 
of the ends of the paths.

\begin{theorem}\label{thm:fss}
Let $N$ be a DAG. Then $d(N)$ is equal to the 
number of unmatched vertices in $V_1$ relative to a maximum matching of $G(N)$.
\end{theorem}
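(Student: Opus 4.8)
The plan is to establish the identity $d(N) = |V_1| - \nu(G(N))$, where $\nu(G(N))$ denotes the size of a maximum matching of the bipartite graph $G(N)$; since $|V_1| = |V(N)|$, the number of unmatched vertices in $V_1$ is exactly $|V(N)| - \nu(G(N))$, and so the claim is equivalent to showing $d(N) + \nu(G(N)) = |V(N)|$. I would prove this by exhibiting a bijective-style correspondence between path partitions of $N$ and matchings of $G(N)$, and then arguing that this correspondence sends minimum path partitions to maximum matchings.

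First I would describe how to turn a path partition into a matching. Given a collection $\P$ of vertex-disjoint directed paths covering $V(N)$, every vertex $v$ that is not the last vertex of its path has a well-defined successor $w$ along that path, giving an arc $(v, w) \in A(N)$; I map this to the edge $\{v, w\}$ of $G(N)$ with $v \in V_1$ and $w \in V_2$. Because the paths are vertex-disjoint and each vertex has at most one successor and at most one predecessor within its path, no two chosen edges share an endpoint in $V_1$ (each $v$ has a unique successor) nor in $V_2$ (each $w$ has a unique predecessor), so the image is indeed a matching $M$ of $G(N)$. If $\P$ consists of $k$ paths on $n = |V(N)|$ vertices, then exactly $n - k$ vertices are non-terminal, so $|M| = n - k$, giving $k = n - |M|$.

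Next I would describe the reverse direction. Given a matching $M$ of $G(N)$, reading each edge $\{u, v\}$ with $u \in V_1$, $v \in V_2$ as the instruction ``$v$ follows $u$'' defines a partial successor function on $V(N)$ in which every vertex has at most one successor (matched in $V_1$) and at most one predecessor (matched in $V_2$). The key point is that following successors never produces a cycle: since every arc of $N$ respects the acyclic order of the DAG, any maximal chain of successors is a genuine directed path in $N$, and the chains partition $V(N)$ into $n - |M|$ vertex-disjoint directed paths. Hence every matching of size $|M|$ yields a path partition of size $n - |M|$, and this is inverse to the previous construction.

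Combining the two directions, the minimum number of paths $d(N)$ corresponds precisely to the maximum matching size: $d(N) = n - \nu(G(N))$, which is the number of unmatched vertices in $V_1$. The main obstacle I would watch for is the acyclicity argument in the reverse direction, namely verifying that the successor chains induced by an arbitrary matching cannot close up into a cycle; this is exactly where the DAG hypothesis on $N$ is essential, since an arbitrary directed graph could admit matchings encoding cyclic successor relations. I would therefore make the topological-order argument explicit, noting that each successor step strictly advances along any fixed topological ordering of $N$, so the chains terminate and are simple directed paths.
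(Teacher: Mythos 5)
Your proof is correct and follows essentially the same route as the paper, which defers to \cite[Lemma~4.1]{francis2018new} with exactly this idea: matched vertices of $V_1$ take their $V_2$-partner as successor, unmatched vertices are the path ends, and acyclicity of $N$ rules out cycles in the successor relation. Your write-up simply makes explicit the bijection and the count $d(N) = |V(N)| - \nu(G(N))$ that the paper leaves to the cited reference.
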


The following corollary is the DAG-analogue of \cite[Corollary~4.2]{francis2018new};
the proof is essentially the same but we repeat it 
for the reader's convenience.

\begin{corollary}
Let $N$ be a DAG. Then $N$ is weakly forest-based if and only $G(N)$ has a matching
of size $|V(N)|-|L(N)|$. In particular, we can decide 
if $N$ is weakly forest-based in $O(|V|^{5/2})$ time.
\end{corollary}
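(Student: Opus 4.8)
The plan is to chain together the two results already established—Theorem~\ref{thm:fb-link}(i) and Theorem~\ref{thm:fss}—with the elementary observation that a leaf path partition is exactly a path partition attaining the minimum possible number of paths.

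First I would record that for any DAG $N$, every directed path contains at most one leaf, since a leaf has outdegree $0$ and can therefore only occur as the final vertex of a path. Consequently $d(N) \ge |L(N)|$, and the equality $d(N) = |L(N)|$ holds if and only if $N$ admits a path partition into exactly $|L(N)|$ paths; in such a partition each path must contain exactly one leaf, and that leaf must be its endpoint. Hence $d(N) = |L(N)|$ is equivalent to $N$ admitting a leaf path partition. Combined with Theorem~\ref{thm:fb-link}(i), this shows that $N$ is weakly forest-based if and only if $d(N) = |L(N)|$.

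Next I would translate this into the matching condition using Theorem~\ref{thm:fss}. Let $M$ be a maximum matching of $G(N)$. Since each edge of $M$ saturates exactly one vertex of $V_1$ and $|V_1| = |V(N)|$, the number of unmatched vertices of $V_1$ equals $|V(N)| - |M|$. By Theorem~\ref{thm:fss} this quantity is $d(N)$, so $d(N) = |L(N)|$ is equivalent to $|M| = |V(N)| - |L(N)|$. Putting the two equivalences together yields the stated characterization: $N$ is weakly forest-based if and only if $G(N)$ has a matching of size $|V(N)| - |L(N)|$.

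Finally, for the complexity claim I would invoke the Hopcroft–Karp algorithm, which computes a maximum matching in a bipartite graph in time $O(|E|\sqrt{|V|})$. For $G(N)$ we have $|V_1| = |V_2| = |V(N)|$ and at most one edge per arc of $N$, so the edge set has size $O(|V(N)|^2)$, giving an overall bound of $O(|V(N)|^{5/2})$. The only point demanding any care is the first step: one must confirm that in a minimum path partition the unique leaf of each path genuinely lies at the path's end, but this is immediate from the outdegree-$0$ property of leaves, so I do not expect a real obstacle here.
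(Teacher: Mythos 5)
Your proof is correct and follows essentially the same route as the paper's: both chain Theorem~\ref{thm:fb-link}(i) with Theorem~\ref{thm:fss} via the observation that leaf path partitions are exactly the path partitions into $|L(N)|$ paths, and both obtain the complexity bound by running Hopcroft--Karp on $G(N)$. The only cosmetic difference is in justifying that a matching of size $|V(N)|-|L(N)|$ is necessarily maximum: the paper notes that the vertices of $L(N)$ in $V_1$ can never be matched, whereas your inequality $d(N)\ge|L(N)|$ combined with Theorem~\ref{thm:fss} yields the same upper bound on the maximum matching size (a bridging line you should state explicitly, but which follows immediately from facts you already established).
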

\begin{proof}
The elements of $L(N)$ in $V_1$ can clearly never be matched, 
and so $G(N)$ has a matching of size $|V(N)|-|L(N)|$
if and only if $G(N)$ has a {\em maximum} matching of this size.
Now, by Theorem~\ref{thm:fss}, the latter holds if and only if $d(N)=|L(N)|$.
But this is the case if and only if $N$ is weakly forest-based.
The last statement follows since a matching in a bipartite graph
with $n$ vertices and $m$ edges can be found in $(m+n)\sqrt{n}$ time \cite{hopcroft1973n},
and $G(N)$ has $2|V(N)|$ vertices and 
$O(|V(N)|^2)$ edges since $G(N)$ has the same number of edges as $N$.
\end{proof}

\section{Hardness results}\label{sec:hardness}

In the main result in this section, we shall 
show that it is NP-complete to decide whether a binary network $N$ with 
three roots and three leaves is forest-based. We achieve this using the characterization from Theorem~\ref{thm:fb-link}. That is, we show that deciding whether $N$ admits a leaf IPP is NP-complete.  
Note that since $N$ has three leaves, this is equivalent to asking whether a given DAG can be partitioned into three induced paths.   Recall that the hardness results from~\cite{kawarabayashi2008induced} imply that it is NP-complete to find three vertex-disjoint induced paths with specified ends $(s_1, t_1), (s_2, t_2), (s_3, t_3)$.  Although these ends could be specified as the three roots and three leaves, we note that this problem does not reduce immediately to ours, because the latter has not been shown NP-complete on binary networks, and because we require our paths to cover every vertex.

The reduction that we shall use for the main result 
is from \textsc{Monotone NAE-3-SAT} \cite{dehghan2015complexity}.
In this problem, the input is a set of Boolean clauses, each containing exactly three positive literals (thus, no negation).  The goal is to find an assignment of the variables so that, for each clause, the variables of the clause are not all assigned true, and not all assigned false either.

\begin{theorem}\label{thm:three-paths-hard}
    It is NP-complete to decide whether a connected binary DAG with three roots and three leaves can be partitioned into three induced paths.

    Moreover, unless the ETH fails, under the same constraints the problem cannot be solved in time $2^{o(n + m)} n^c$, where $n, m$ are the number of vertices and arcs, respectively, and $c$ is any constant.
\end{theorem}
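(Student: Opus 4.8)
The plan is to reduce from \textsc{Monotone NAE-3-SAT}, whose hardness (including the ETH lower bound) is assumed. Given a formula with variables $x_1, \ldots, x_n$ and clauses $C_1, \ldots, C_m$, each containing three positive literals, I would build a binary DAG $N$ with exactly three roots and three leaves such that $N$ admits a leaf IPP (equivalently, can be partitioned into three induced paths) if and only if the formula has a not-all-equal satisfying assignment. The governing intuition is that an induced path through the DAG can encode a ``true'' route and another a ``false'' route, so that each variable-gadget is forced to commit to one of two traversals (one path enters it, the other must bypass it), while each clause-gadget must be visited by at least one true-path and at least one false-path — precisely the NAE condition. Because there are only three leaves, there are only three paths available, which is what makes the gadgeting delicate and interesting.

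\medskip

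\noindent\textbf{Key steps, in order.} First I would fix a global layout: designate two of the three root-to-leaf paths as the ``true'' and ``false'' paths that thread through all variable gadgets in a fixed left-to-right order, and reserve the third path as a ``garbage collector'' whose job is to cover the vertices that the true/false paths must skip (so that the partition is total). Second, I would design a \emph{variable gadget} for each $x_i$: a small binary subgraph with two internal routes, such that any leaf IPP of $N$ is forced to send exactly one of the true/false paths through the ``set-true'' route and routes the other around it, with the skipped vertices picked up by the third path. The induced-path constraint (no chords) is the lever here — it prevents a single path from greedily absorbing both routes. Third, I would design a \emph{clause gadget} for each $C_j = (x_a \vee x_b \vee x_c)$ that is connected to the three corresponding variable gadgets so that the clause vertices can be covered by induced paths \emph{only if} at least one incident variable carries the true-signal and at least one carries the false-signal; if all three variables agree, some clause vertex is left uncoverable by an induced path, breaking the partition. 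Fourth, I would verify the binary degree constraints throughout (inserting subdivision-free branching vertices and being careful that reticulations have outdegree $1$ and leaves indegree $1$, as required by the network definition), check that $N$ is connected, and confirm there are exactly three roots and three leaves. Finally, I would prove the two directions of correctness: a satisfying NAE assignment yields a valid three-induced-path partition by routing each variable gadget according to its truth value and letting the garbage path mop up; conversely, any leaf IPP induces a consistent truth assignment that must satisfy every clause in the NAE sense. Membership in NP is immediate since a candidate partition is checkable in polynomial time, and since the reduction is linear in $n + m$, the stated ETH lower bound transfers directly.

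\medskip

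\noindent\textbf{The main obstacle.} The hard part will be engineering the clause gadget so that it simultaneously enforces the NAE condition, respects the binary (total-degree-$3$) constraint, and — crucially — cannot be ``rescued'' by the third path in illegal configurations. With only three paths total and one already committed as garbage collector, I must ensure that the true and false paths cannot be rerouted to cover a monochromatic clause, and that the garbage path cannot itself form an induced path covering clause vertices when the assignment is bad. Controlling induced-ness (the absence of chords) while keeping every vertex of degree at most three is the crux: I would likely need to route the true/false signals through the clause gadget via attachment vertices whose adjacencies are arranged so that any induced path covering the clause core must alternate between a true-carrier and a false-carrier. Getting this local gadget correct, and then arguing that no global rerouting of the three paths can circumvent it, is where the real work lies.
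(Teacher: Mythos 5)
Your overall strategy coincides with the paper's: reduce from \textsc{Monotone NAE-3-SAT}, use two of the three induced paths as true/false carriers threading the variable gadgets in a fixed order, use chordlessness as the enforcement mechanism at the clause gadgets, and transfer the ETH bound via a linear-size reduction. But what you have written is a specification of what the gadgets should accomplish, not a construction of them, and in a hardness proof of this kind the gadgets \emph{are} the proof. You acknowledge this yourself (``getting this local gadget correct \ldots is where the real work lies''), so the crux is left unresolved: you never exhibit a binary DAG fragment with the claimed forcing properties, nor the argument that no global rerouting of the three paths circumvents it. For the record, the paper resolves this with a concrete mechanism: each occurrence of $x_d$ in clause $C_j$ is a vertex $x_d(j)$ on the carrier lane with an extra arc to a clause-exit vertex $y_j(d)$; since the successor of $x_d(j)$ inside its own path is forced to be its other out-neighbor, a path can pass through $y_j(d)$ only if it does \emph{not} contain $x_d(j)$ (otherwise $(x_d(j), y_j(d))$ would be a chord). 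Hence the carrier ending at $y_j(d)$ ``voted false'' on $x_d$ and the carrier ending at $y_j(e)$ ``voted true'' on $x_e$, which is exactly the NAE condition. The clause gadget also needs internal switchers realizing every permutation of the three entering paths; you do not address this, yet it is needed for the forward direction, because which clause variable is false varies from clause to clause, so the carriers must be reroutable to the appropriate exits.

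One of your concrete structural choices would moreover break the converse direction. You propose that the third path act as a ``garbage collector'' that picks up skipped vertices inside the variable gadgets. In the paper's construction the third path is deliberately \emph{confined to the clause gadgets}: its root is the first vertex of the third strand of the first clause gadget, so it cannot reach any occurrence vertex $x_i(j)$. This confinement is what makes the extracted assignment well-defined: all occurrence vertices of a variable $x_i$ lie entirely in $P_1$ or entirely in $P_2$, so ``$x_e(j)\notin P_2$'' genuinely implies ``$x_e(j)\in P_1$''. If your third path may absorb occurrence vertices, then a carrier avoiding $x_e(j)$ tells you nothing about which truth value $x_e$ received, and the NAE extraction collapses. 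In the paper, both routes of each variable gadget are covered by the two carriers themselves --- which carrier takes the long route encodes the truth value, via switcher arcs $(a_i,b_i')$ and $(b_i,a_i')$ --- and nothing in the variable region is left over for the third path. You would need either to adopt this design or to supply a separate argument preventing the garbage path from touching occurrence vertices.
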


\begin{proof}
The problem is in NP since it is easy to verify that a given partition of the vertices of a DAG forms three induced paths.

For NP-hardness, 
consider an instance $\phi$ of the \textsc{Monotone NAE-3-SAT} problem, where $\phi$ has variables $x_1, \ldots, x_n$ and clauses $C_1, \ldots, C_m$, each with three positive literals.
We generate a connected binary DAG $N$ with three roots and three leaves as follows.
The main idea is that, in a desired leaf IPP consisting of three induced paths $P_1, P_2, P_3$, the first two paths   
$P_1$ and $P_2$ will first go through a set of vertices that represent a choice of values for the variables.  The vertices in $P_1$ will represent the variables assigned true, and the vertices in $P_2$ those assigned false.  A sequence of variable gadgets is introduced to enforce this.  After this, we introduce a gadget for each clause $C_j = (x_a \vee x_b \vee x_c)$ such that each of the three induced paths must go through a distinct vertex corresponding to $x_a, x_b, x_c$.  The paths $P_1, P_2$ will be able to ``pass through'' this gadget only if each of $P_1$ and $P_2$ has not encountered one of the $x_a, x_b, x_c$ vertices in the previous step (the third path $P_3$ is only there to cover the remaining vertex of the clause gadget).

For a variable $x_i$, $i \in [n]$, let $l(i)$ denote the number of clauses that contain $x_i$. 
 To ease notation below, we write $l := l(i)$, with the understanding 
 that $l$ depends on the variable $x_i$ under consideration.
Let $j_1, \ldots, j_{l}$ be the set of indices of the clauses that contain $x_i$, that is, 
$C_{j_1}, C_{j_2}, \ldots, C_{j_{l}}$ 
 is the set of clauses that contain $x_i$. 
Create a gadget $X_i$ that contains two  induced directed paths $X_i^1, X_i^2$ (see Figure~\ref{fig:XiYi}).   The directed path $X_i^1$ consists of $l + 2$ vertices $a_i \rightarrow a'_i \rightarrow x_i(j_1) \rightarrow x_i(j_2) \rightarrow \ldots \rightarrow x_i(j_l)$. 
The directed path $X_i^2$ consists of two vertices $b_i \rightarrow b'_i$.  
Then we add the arcs $(a_i, b'_i), (b_i, a'_i)$, which will allow switching paths.  
In addition, we connect the $X_i$ gadgets as follows (see Figure~\ref{fig:nae-full}).  
For each $i \in [n - 1]$, add the arc $(x_i(j_l), a_{i+1})$ 
and $(b'_i, b_{i+1})$.  

\begin{figure}[H]
    \centering
    \includegraphics[width=\textwidth]{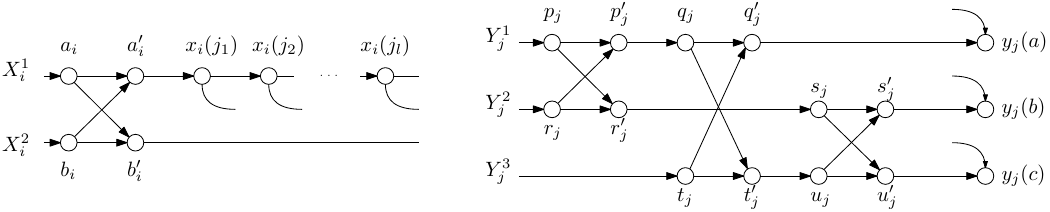}
    \caption{Left: one of the $X_i$ gadgets.  Here, $i > 1$ is assumed (if $i = 1$, $a_1$ and $b_1$ are roots).  Each vertex $x_i(j)$ has an out-neighbor $y_j(i)$ that is not shown. 
    Right: one of the $Y_j$ gadgets for a clause $C_j = (x_a \vee x_b \vee x_c)$.  The in-neighbors of $y_j(a), y_j(b), y_j(c)$ which are not shown are, respectively, $x_a(j), x_b(j), x_c(j)$.
    Note that the first vertex $t_1$ of $Y_1^3$ has no in-neighbor. }
    \label{fig:XiYi}
\end{figure}

Next, for each clause $C_j = (x_a \vee x_b \vee x_c)$, add a gadget $Y_j$ that consists of three induced directed paths $Y_j^1, Y_j^2, Y_j^3$ as in Figure~\ref{fig:XiYi}.  Roughly speaking, first there is a $Y_j^1 - Y_j^2$ path switcher, followed by a $Y_j^1 - Y_j^3$ path switcher, and then a $Y_j^2 - Y_j^3$ path switcher. The paths respectively end at vertices $y_j(a), y_j(b), y_j(c)$ with, respectively, additional in-neighbors $x_a(j), x_b(j), x_c(j)$.  
The switchers allow the permutation of the desired induced paths that enter the gadget 
in every possible way.
In more detail, the $Y_j$ gadget has the directed path $Y_j^1$ with 
vertices $p_j - p_j' - q_j - q_j' - y_j(a)$, the directed path $Y_j^2$ with vertices 
$r_j - r_j' - s_j - s'_j - y_j(b)$, and 
the directed path $Y_j^3$ with vertices $t_j - t'_j - u_j - u'_j - y_j(c)$.  We add the arcs $(p_j, r_j'), (r_j, p_j'), (q_j, t'_j), (t_j, q_j')$, and $(s_j, u'_j), (u_j, s_j')$. 

\begin{figure}[H]
    \centering
    \includegraphics[width=\textwidth]{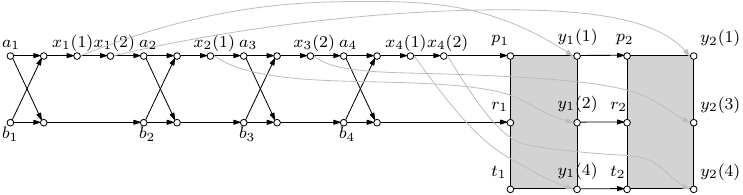}
    \caption{A detailed example over variables $x_1, x_2, x_3, x_4$ and clauses $C_1 = (x_1 \vee x_2 \vee x_4), C_2 = (x_1 \vee x_3 \vee x_4)$.  For clarity, only the vertices entering and exiting the $Y_j$ gadgets are shown.  As an example, notice that the vertex $x_3(2)$ exists because $x_3$ is present in $C_2$, which implies the presence of the arc $(x_3(2), y_2(3))$.}
    \label{fig:nae-full}
\end{figure}

To connect  the $C_j$ gadgets, for each $j \in [m - 1]$, add an arc from the last vertex of $Y_j^i$ to the first vertex of $Y_{j+1}^i$, for $i \in \{1,2,3\}$.  
Then add an arc from the last vertex of $X_n^i$ to the first vertex of $Y_1^i$, for $i \in \{1,2\}$. The first vertex $t_1$ of $Y_1^3$ has no in-neighbor and is therefore a root.  
Finally, noting that the vertices $x_a(j), x_b(j), x_c(j)$ exist, we 
also add the arcs $(x_a(j), y_j(a)), (x_b(j), y_j(b)), (x_c(j), y_j(c))$ (see 
Figure~\ref{fig:nae-full}).  

This completes the construction of $N$.
One can check that the network is binary.
We show that $\phi$ admits a not-all-equal assignment if and only $N$ admits a leaf IPP.

($\Rightarrow$)
Suppose that $\phi$ admits a not-all-equal assignment $A$, where we denote $A(x_i) \in \{T, F\}$ for the value of $x_i$ assigned by $A$.
The three induced paths of $N$ are constructed algorithmically.  The first phase corresponds to an assignment and puts the vertices of the $X_i$ gadgets, plus $p_1, r_1$, into the induced paths $P_1, P_2$ (and $t_1$ in $P_3$).  In a second phase, we extend those paths to include the vertices of the $Y_j$ gadgets.

In the first phase, we begin by initiating the construction of path $P_1$, which starts at $a_1$.  An illustration is provided in Figure~\ref{fig:nae-sol}.  The path is built iteratively for $i = 1,\dots,n$ in this order, with the invariant that before applying the $i$-th step, $P_1$ contains exactly one of $a_i$ or $b_i$ (which is true for $i = 1$).  
So, for $i \in [n - 1]$, assume that $P_1$ currently ends at $a_i$ or $b_i$.  If $A(x_i) = T$, $P_1$ goes to $a_i'$, then through the vertices $x_i(j_1) - \ldots - x_i(j_l)$, and then to $a_{i+1}$ (where here, $l = l(i)$).  If $i = n$, $P_1$ is extended in the same manner except that we go to $p_1$, the first vertex of $Y_1^1$, as shown in Figure~\ref{fig:nae-sol}.  If $A(x_i) = F$, then $P_1$ goes to $b'_i$ and then to $b_{i+1}$ (or, if $i = n$, to $r_1$, the first vertex of $Y_1^2$).

Note that at this stage, $P_1$ is an induced path, since the only vertices of the $X_i$ gadgets with two in-neighbors are the $a_i', b_i'$ vertices, and we cannot include both of their in-neighbors in the same path. \emph{Also note for later reference that for any vertex of the form $x_i(j)$, $P_1$ contains $x_i(j)$ if and only if $A(x_i) = T$.}

\begin{figure}[H]
    \centering
    \includegraphics[width=\textwidth]{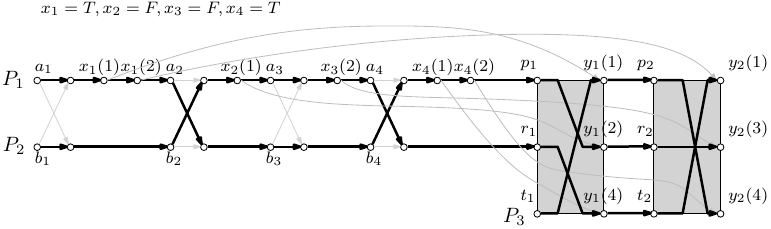}
    \caption{An induced path partition that corresponds to assigning $x_1, x_4$ to true and $x_2, x_3$ to false.  Notice that, for example, $P_1$ goes through $y_1(2)$ and $y_2(3)$ because it avoided going through $x_2(1)$ and $x_3(2)$.}
    \label{fig:nae-sol}
\end{figure}

Next, we let $P_2$ consist of all the vertices of the $X_i$ gadgets, for $i \in [n]$, that are not in $P_1$, plus $\{p_1, r_1\} \setminus P_1$.  In other words, to construct $P_2$ follow the same procedure as $P_1$, but start at $b_1$ and apply the opposite of the assignment $A$.  At this stage, $P_2$ is also induced by the same arguments.
\emph{Moreover, $P_2$ contains $x_i(j)$ if and only if $A(x_i) = F$.}

Finally, let $P_3$ consist of the vertex $t_1$, the first vertex of $Y_1^3$ (which is a root).  This completes the first phase.

In the second phase, we next extend, in an iterative manner, the induced paths constructed so far by adding the vertices of the $Y_j$ gadgets.  
For $j = 1,\dots, m$ in this order, assume that we have reached a point where the last vertices of $P_1, P_2, $ and $P_3$ are $p_j, r_j$, and $t_j$ (without assuming which vertex currently ends which path).  Note that this is true for $j = 1$ when we start this phase.
Let $C_j = (x_a \vee x_b \vee x_c)$.  
Since $A$ is a not-all-equal assignment,
one of the variables is false, say $x_d$ where $d \in \{a,b,c\}$, and one of the variables is true, say $x_e$ where $e \in \{a,b,c\}$ and $e \neq d$.

We now extend $P_1, P_2, P_3$ so that they cover all the vertices of the $Y_j$ gadget, and so that $P_1$ ends at $y_j(d)$ and $P_2$ ends at $y_j(e)$ (and $P_3$ ends at the remaining $y_j$ vertex).
This is always possible since 
the three path switchers in the gadget $Y_j$ can be used 
to extend and redirect the paths $P_1$ and $P_2$ to the desired exit vertex
(see Figure~\ref{fig:ipp-permut}).
For example, if $P_2, P_1, P_3$ enter at $p_j, r_j, t_j$, respectively, and we want to 
extend and redirect them to $y_j(b), y_j(c), y_j(a)$, respectively, then we would use the $123 \rightarrow 231$ permutation (see the caption of the figure).

\begin{figure}[H]
    \centering
    \includegraphics[width=\textwidth]{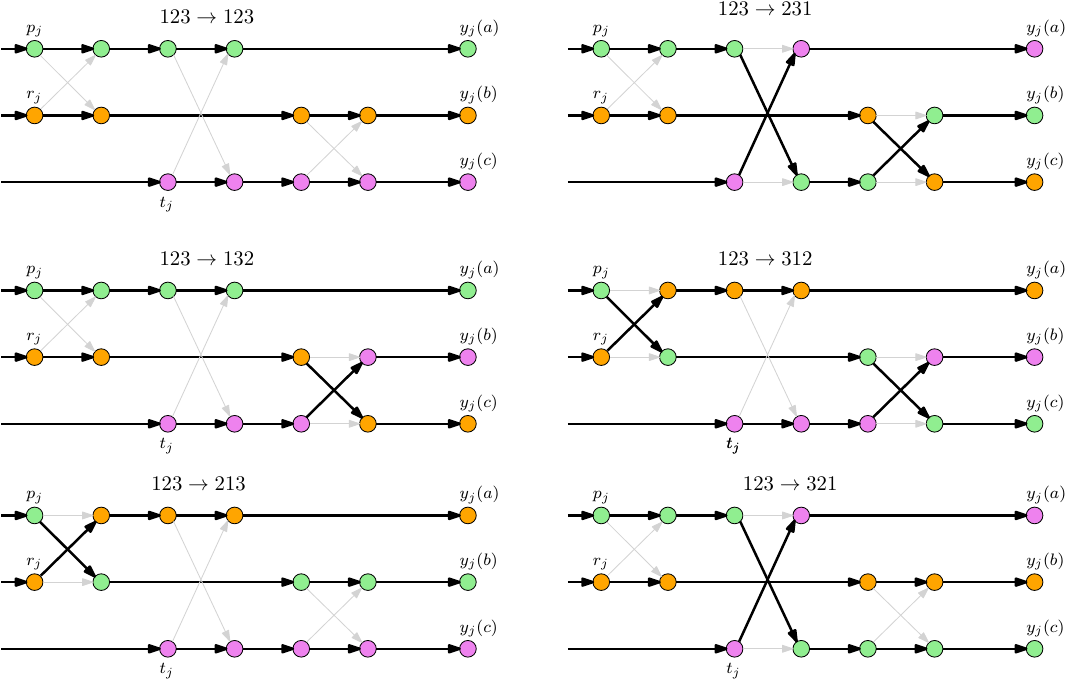}
    \caption{An illustration of how $P_1, P_2, P_3$ can be constructed to make them reach any set of desired ends of the $Y_j$ gadget.
    Vertices of the same color are in the same path, and the arcs in bold show the arcs of the three paths.  The numbers $1,2,3$ refer to the index of the entering path from top to bottom.  The permutation $123 \rightarrow ijk$ means that the first, second, and third paths exit as the $i$-th, $j$-th, and $k$-th paths, respectively.   
    }
    \label{fig:ipp-permut}
\end{figure}

Now, consider the vertices with two in-neighbors that are added to the paths at this stage.
These include  the $p'_j, q'_j, r'_j, s'_j, t'_j, u'_j$ vertices, which in all six cases of Figure~\ref{fig:ipp-permut} have exactly one in-neighbor in the path 
that contains them (one way to verify this is to check that those vertices always have exactly one in-neighbor of the same color).  Thus these vertices cannot create non-induced paths.

The other vertices with two in-neighbors
are $y_j(a), y_j(b), y_j(c)$. 
Since $P_3$ does not contain any $x_i(j)$ vertex, it remains induced.  
As for $P_1$, because $x_d$ is chosen as a false variable, $P_1$ does not contain $x_d(j)$.  Therefore, adding $y_j(d)$ to $P_1$ safely preserves the induced property.  
For $P_2$, because $x_e$ is assigned true, $P_2$ does not contain $x_e(j)$ and $y_j(e)$ can be added to $P_2$.  It follows that extending the paths to cover the $Y_j$ vertices preserve the induced property of each path.

Now, if $j < n$, we then add to each path the single out-neighbor of their respective $y_j(a), y_j(b), y_j(c)$ vertices (which are $p_{j+1}, r_{j+1}, t_{j+1}$, vertices of in-degree $1$ that cannot create non-induced paths by appending them), which ensures that the extension can be 
applied to the next gadget.  When we reach $j = m$, the paths end at the leaves of the $C_m$ gadget, which 
concludes the proof that the three induced paths can be constructed as desired.

($\Leftarrow$)
Suppose that $N$ can be partitioned into three induced paths $P_1, P_2, P_3$, where $P_1$ starts at $a_1$, $P_2$ starts at $b_1$, and $P_3$ starts at $t_1$ in the $Y_1$ gadget.  
Let $i \in [n]$, with $l := l(i)$.  Observe that in the $X_i$ gadget, each $x_i(j)$ vertex has a single incoming arc.  Because each vertex aside from the roots must have an in-neighbor in its path, each of these arcs must be in the same induced path.  Moreover, the root of $P_3$ cannot reach these vertices, and therefore $x_i(j_1), \ldots, x_i(j_l)$ are either all in $P_1$, or all in $P_2$.  Also note that by this argument, none of the arcs $(x_i(j), y_j(i))$ can be 
contained in $P_1$ or $P_2$, because the out-neighbor of each $x_i(j)$ in its path must be the vertex other than $y_j(i)$ (one can check that this is also true for the $x_i(j_l)$ vertex, the last vertex of $X_i^1$).  In other words, $x_i(j)$ and $y_j(i)$ cannot be in the same path.
 
Now, consider the assignment $A$ that, for each $i \in [n]$, puts $A(x_i) = T$ if and only if $x_i(j_1), \ldots, x_i(j_l) \in P_1$, where again for each $x_i$, $l = l(i)$ is the number of clauses containing $x_i$.  
As argued above, all the $x_i(j)$'s are in the same path, so $A(x_i)$ is a well-defined assignment.  
We argue that $A$ is a not-all-equal assignment of $\phi$.  Let $C_j = (x_a \vee x_b \vee x_c)$ be a clause.  
Observe that in $N$, none of the vertices in $y_j(a), y_j(b), y_j(c)$ reach each other.  Therefore, they must all be in distinct induced paths.
In particular, $P_1$ must go through one of those, say $y_j(d)$, where $d \in \{a,b,c\}$.  This means that $P_1$ cannot contain $x_d(j)$, as otherwise $P_1$ would not be induced (since $(x_d(j), y_j(d))$ exists but it is not used by $P_1$).  
This means that we assign $A(x_d) = F$, and thus at least one variable of $C_j$ is false.
Also, $P_2$ must go through one of the three vertices as well, say $y_j(e)$ where $e \in \{a,b,c\}$ and $e \neq d$.  
As before, this means that $P_2$ does not contain $x_e(j)$, and thus $P_1$ must contain $x_e(j)$.  We assign $A(x_e) = T$, and thus at least one variable of $C_j$ is true.  
Since this holds for every $C_j$, $A$ is a not-all-equal assignment of $\phi$.

We have thus shown NP-completeness.  As for the ETH lower bound, it was shown in~\cite[Proposition 5.1]{antony2024switching} that, unless the ETH fails, \textsc{Monotone NAE-3-SAT} cannot be solved in time $2^{o(n + m)} n^c$, where $n$ is the number of variables and $m$ the number of clauses.  Consider the number of vertices and arcs of a constructed instance $N$.  The number of vertices of each $X_i$ gadget is $4 + l(i)$ and, since each clause has three variables, the total number of vertices in the $X_i$ gadgets is $4n + \sum_{i=1}^n l(i) = 4n + 3m$.
Each $Y_i$ gadget has $15$ vertices and 
the total number of vertices in the $Y_i$ gadgets is $15m$.  Therefore, $V(N) \in O(n + m)$ and, since $N$ is binary, $A(N) \in O(n + m)$.  It follows that a $2^{o(|V(N)| + |A(N)|)} n^c$ time algorithm for the forest-based recognition problem could be used to solve \textsc{Monotone NAE 3-SAT} in time $2^{o(n + m)} n^c$, which cannot occur if the ETH is true.
\end{proof}

It may be interesting to note that in the reduction of Theorem~\ref{thm:three-paths-hard}, only two paths are ``useful'', in the sense that they respectively correspond to the variables assigned positively and negatively.  The third path is more of a ``dummy'' path solely used to cover unused vertices.  This may lead to the intuition that the problem is NP-complete on two paths, but our positive result in the next section shows that the dummy path is necessary to make the problem hard, at least in the binary case.

The last theorem answers a question in Fernau et al.~\cite[Section 9]{fernau2023parameterizing}, in which it was asked whether partitioning a DAG into at most $k$ induced paths is in XP, i.e.,
whether it can be done in polynomial time if $k$ is fixed.  Recall that a problem is para-NP-hard with respect to a parameter $k$ if the problem is NP-hard even when $k$ is a fixed constant, making it unlikely to belong to the XP complexity class.

\begin{corollary}
    The problem of partitioning a connected binary DAG into at most $k$ induced paths is NP-complete for every fixed $k \geq 3$.  The problem is therefore para-NP-hard with respect to parameter $k$.
\end{corollary}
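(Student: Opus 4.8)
The plan is to derive the corollary directly from Theorem~\ref{thm:three-paths-hard} by fixing the parameter $k$ to the constant value $3$. Recall that a parameterized problem is \emph{para-NP-hard} with respect to $k$ precisely when it remains NP-hard when $k$ is set to some fixed constant; hence it suffices to exhibit one such constant value of $k$ for which partitioning a binary DAG into at most $k$ induced paths is NP-hard.

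The key step is to observe that, on the instances produced by Theorem~\ref{thm:three-paths-hard}, the constraint ``at most $3$ induced paths'' coincides with ``exactly $3$ induced paths, each ending at a leaf.'' Indeed, let $N$ be a binary DAG with exactly three leaves, and consider any path partition of $N$ (induced or not). Since each leaf has outdegree $0$, it must be the final vertex of the directed path that contains it; in particular no two leaves can lie in a common path. It follows that every path partition of $N$ uses at least three paths. Therefore, for such an $N$, a partition into \emph{at most} three induced paths is in fact a partition into \emph{exactly} three induced paths, and each of the three paths ends at a distinct leaf, that is, it is a leaf IPP in the sense of Theorem~\ref{thm:fb-link}.

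With this equivalence in hand, I would simply invoke Theorem~\ref{thm:three-paths-hard}, which asserts that deciding whether a connected binary DAG with three roots and three leaves can be partitioned into three induced paths is NP-complete. By the observation above, this is the same decision problem as asking whether such a DAG admits a partition into at most $k = 3$ induced paths. Hence the problem of partitioning a binary DAG into at most $k$ induced paths is NP-hard already for the fixed constant $k = 3$, which is exactly the definition of para-NP-hardness with respect to $k$.

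I do not anticipate any genuine obstacle here: the entire content of the corollary is contained in Theorem~\ref{thm:three-paths-hard}, and the only point requiring care is the elementary argument that a three-leaf DAG forces every path partition to use at least three paths, so that the ``at most $k$'' formulation is not strictly weaker than the ``exactly three paths'' formulation on these instances. Consequently the reduction transfers without modification.
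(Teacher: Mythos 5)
Your proposal is correct and is essentially the paper's own argument: the paper treats this corollary as an immediate consequence of Theorem~\ref{thm:three-paths-hard}, relying on exactly the observation you make explicit, namely that in a DAG with three leaves every path partition needs at least three paths (each leaf must terminate its own path), so ``at most $k=3$'' coincides with ``exactly three'' on the hard instances. Your write-up just spells out this elementary bridging step that the paper leaves implicit.
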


\begin{proof}
    Theorem~\ref{thm:three-paths-hard} shows that the problem is NP-complete for $k = 3$.  For $k > 3$, we can easily reduce from the case of partitioning a connected binary DAG into three induced paths as follows.  
    Given an instance $N$ of the latter, obtain $N'$ by adding to $N$ a connected component consisting of any binary tree $T$ with $k - 3$ leaves (with arcs directed away from the root).  Then take any root $r$ of $N$, add a new vertex $v$, and give to $v$ as out-neighbors $r$ and the root of $T$.  
    This resulting $N'$ is a connected binary DAG. 
    If $N$ can be partitioned into three induced paths, we take that partition and add any induced path partition of $T$ into $k - 3$ induced paths (which is easily seen to exist, since trees are forest-based), and add $v$ to the path that contains the root of $T$.  Conversely, if $N'$ can be partitioned into $k$ induced paths, then $k - 3$ of these paths must partition $T$, because leaves are in distinct paths and the vertices of $T$ only reach those leaves.  This means that the vertices of $N$ must be partitioned into the remaining three paths (possibly with $v$, which we may delete from its path).
\end{proof}

With a slight adaptation of the above, we can also show that 
even tree-based, binary phylogenetic networks are no easier 
to deal with than binary DAGs with three roots.

\begin{corollary}
    It is NP-complete to decide whether a tree-based, binary phylogenetic network with three leaves is forest-based.
\end{corollary}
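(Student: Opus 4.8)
The plan is to reuse the three-root construction of Theorem~\ref{thm:three-paths-hard}. Let $N$ be the binary DAG that this theorem builds from a \textsc{Monotone NAE-3-SAT} instance $\phi$; recall that $N$ is connected, has exactly three roots $a_1,b_1,t_1$ and three sinks, and admits a leaf IPP if and only if $\phi$ has a not-all-equal satisfying assignment. My goal is to turn $N$ into a single-rooted network $N'$ and to show that $N'$ is (a) a binary phylogenetic network with three leaves, (b) tree-based for \emph{every} $\phi$, and (c) forest-based if and only if $N$ admits a leaf IPP. Together with Theorem~\ref{thm:fb-link}(ii) and Theorem~\ref{thm:three-paths-hard}, these facts yield the corollary. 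As a preliminary clean-up, to meet the in-degree-one requirement on leaves I would hang a pendant leaf under each of the three sinks of $N$; this turns each sink into a reticulation of out-degree one and changes nothing in any (induced) leaf path partition, since the pendant arc can only be appended at the end of the path that reaches its sink.

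To obtain a single root, I would introduce two new vertices $\rho$ and $w$ together with the arcs $(\rho,a_1),(\rho,w),(w,b_1),(w,t_1)$, calling the result $N'$. Then $\rho$ is the unique root, with out-degree $2$; $w$ is internal with in-degree $1$ and out-degree $2$; and each former root $a_1,b_1,t_1$ gains in-degree $1$ while keeping its out-degree $2$, so no other vertex is affected. Hence $N'$ is a connected binary phylogenetic network whose leaf set is the three pendant leaves, giving (a).

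For (c), the key point is that the root gadget can only channel the single root into the three former roots, and never creates a chord. In any leaf IPP of $N'$, the path through $\rho$ uses exactly one of $(\rho,a_1),(\rho,w)$, while the occurrence of $w$ (which must start a path precisely when $(\rho,w)$ is unused) uses exactly one of $(w,b_1),(w,t_1)$; a short case analysis shows that in every case, deleting $\rho$ and $w$ leaves three induced paths of $N$ starting at $a_1,b_1,t_1$ and ending at the three leaves, i.e.\ a leaf IPP of $N$. Conversely, from a leaf IPP of $N$ with paths starting at $a_1,b_1,t_1$ I would prepend $\rho$ to the $a_1$-path via $(\rho,a_1)$ and $w$ to the $b_1$-path via $(w,b_1)$, leaving $(\rho,w)$ unused; no chord arises because the only out-neighbours of $\rho$ are $a_1,w$ with $w$ in another path, and the only out-neighbours of $w$ are $b_1,t_1$ with $t_1$ in another path. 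Thus $N'$ has a leaf IPP if and only if $N$ does, and (c) follows.

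The step I expect to need the most care is (b), since the reduction is only meaningful if every $N'$ is tree-based, including those coming from unsatisfiable $\phi$. By the characterisation recalled in Section~\ref{sec:forest-based} (a binary phylogenetic network is tree-based exactly when it admits a leaf path partition, cf.\ \cite{francis2018new} and Theorem~\ref{thm:fb-link}(i)), it suffices to exhibit \emph{one} leaf path partition of $N'$, for which I would drop the induced requirement entirely. Concretely, let the first path run through the whole ``true side'' $a_1\to a_1'\to x_1(\cdot)\to\cdots\to x_n(\cdot)$ and then straight down $Y_1^1,\ldots,Y_m^1$; let the second run through the ``false side'' $b_1\to b_1'\to\cdots\to b_n'$ and then straight down $Y_1^2,\ldots,Y_m^2$; and let the third start at $t_1$ and run straight down $Y_1^3,\ldots,Y_m^3$. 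Each $X_i$ contributes its $a,a',x$-vertices to the first path and its $b,b'$-vertices to the second, and each $Y_j$ contributes $Y_j^1,Y_j^2,Y_j^3$ to the three paths respectively, so these three directed paths are vertex-disjoint, cover all of $N$, and end at its three sinks; prepending $\rho$ and $w$ as above then covers $\rho,w$ and yields a leaf path partition of $N'$. The only thing left to verify is that every listed arc exists and that the paths are genuinely disjoint and exhaustive, which is immediate from the gadget definitions. This gives (b) and completes the reduction. Membership in NP is immediate, since one can guess and verify a subdivision forest in polynomial time.
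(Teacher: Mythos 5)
Your proposal is correct and matches the paper's own proof in all essentials: it reuses the instance from Theorem~\ref{thm:three-paths-hard}, adds pendant leaves to satisfy the indegree-one requirement, attaches a two-vertex root gadget to obtain a single root, proves the leaf-IPP equivalence by prepending/removing the new root vertices, and establishes tree-basedness via the same non-induced path partition of the gadgets (the paper phrases this as an explicit spanning tree, you via the leaf-path-partition characterization, which is the same argument). The only cosmetic difference is that you place $\rho$ and $w$ in different paths where the paper prepends $r \rightarrow r' \rightarrow r_1$ to a single path; both work.
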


\begin{proof}
    The NP membership is as in Theorem~\ref{thm:three-paths-hard}.  Let us argue that NP-hardness still holds even if we require $N$  to satisfy all the requirements of a tree-based, binary phylogenetic network. 

    Let $N$ be an instance of the forest-based recognition problem produced by the reduction of Theorem~\ref{thm:three-paths-hard}, where $N$ is binary and has three roots $r_1, r_2, r_3$ and three leaves $\ell_1, \ell_2, \ell_3$. 
    It can be seen from Figure~\ref{fig:nae-full} that $N$ is weakly forest-based.  Indeed, $V(N)$ can be partitioned into three (non-induced) paths as follows: one path concatenates all the top paths of the $X_i$ gadgets, followed by all the top paths of the $Y_j$ gadgets; one path concatenates all the bottom paths of the $X_i$ gadgets, followed by all the middle paths of the $Y_j$ gadgets; one path concatenates all the bottom paths of the $Y_j$ gadgets.
    Also notice that $N$ is connected, all roots have outdegree $2$, all reticulations have outdegree $1$, and $N$ has no subdivision vertex.  Thus only the requirement on leaves having indegree $1$ is missing to argue that $N$ is a network.
    This can easily be dealt with by creating a new network $N'$, obtained taking the leaves of $N$ and, for each leaf $v$ of indegree $2$, adding a new leaf $v'$ whose single in-neighbor is $v$.  Then, $N$ can be split into three induced paths if and only if $N'$ can, since paths of $N$ that end at a leaf $v$ can be extended with the new leaf $v'$, and conversely for paths of $N'$ that end at such a leaf $v'$, it suffices to remove it.
    Also note that the aforementioned path partition of $N$ can easily be extended to incorporate $v'$ in the same manner.
    It follows that the problem is hard on binary, weakly forest-based \emph{networks} with three roots and three leaves.

    To argue that the problem is also hard on binary phylogenetic networks, that
    is, binary networks with a single root, take $N'$ and 
    obtain $N''$ by adding two vertices $r, r'$, where $r$ has out-neighbors $r', r_3$ and $r'$ has out-neighbors $r_1, r_2$.  Note that $N''$ is still a binary network and is single-rooted.  Moreover, $N'$ can be split into three induced paths if and only if $N'$ can be as well.  Indeed, if $\{P_1, P_2, P_3\}$ is such a partition for $N'$, where $P_1$ starts at $r_1$, then we can add the sub-path $r \rightarrow r' \rightarrow r_1$ at the start of $P_1$, which results in a leaf IPP for $N''$.  Conversely, if $\{ P_1', P_2', P_3'\}$ partitions $N''$ into induced paths, the $r_i$ vertices must be in distinct paths are can only be preceded by $r$ or $r'$.  Therefore, by removing $r$ and $r'$ from these paths, we obtain a leaf IPP for $N'$. One can also see that $N''$ is tree-based as follows: take the subgraph of $N'$ consisting of the three paths from the above path partition, each starting at a distinct root, then add $r, r'$, and their incident arcs to this subgraph.  This results in a spanning tree of $N''$ whose leaves are $L(N'')$, which shows that $N''$ is tree-based.
    Therefore, the problem is also hard on tree-based, binary phylogenetic networks with three leaves.
\end{proof}

\section{Two tractable cases}\label{sec:tractable}

In this section, we first show that the leaf IPP problem 
is polynomial-time solvable on semi-binary DAGs with two leaves, showing that the hardness result from the previous is, in some sense, tight.  Note that the positive result also holds on binary networks, in particular.

We then show that the class of networks known as \emph{orchards} are all forest-based, as they always admit a leaf IPP.  
This generalizes~\cite[Theorem 2]{huber2022forest}, 
in which it is shown that binary \emph{tree-child} networks are forest-based, where 
a network is tree-child if all of its internal vertices of have a child that is a tree-vertex.

\subsection{Partitioning semi-binary DAGs into two induced paths}

In the following, we shall assume that $N$ is a semi-binary DAG 
that we want to partition into two induced paths.
Unlike in the previous section, we do not assume that the roots and leaves of the desired paths are specified, and so we first study a slightly different variant of the problem.

Given a semi-binary DAG $N$ and four distinct vertices $s_1, s_2, t_1, t_2$ of $N$, we ask: can the vertices of $N$ be partitioned into two induced paths $P_1, P_2$, such that the paths start at $s_1$ and $s_2$, and end at $t_1$ and $t_2$?  Note that the given vertices are not required to be roots or leaves, and that the path that starts with $s_1$ could end at either $t_1$ or $t_2$.
We call this the \textsc{Restricted 2-IPP} problem.  We then discuss how this can be used to solve the general problem.
Again, note that finding two disjoint induced paths between specified pairs $(s_1, t_1), (s_2, t_2)$ is NP-complete on DAGs~\cite{kawarabayashi2008induced}, but that the problem differs from ours since we must cover every vertex and restrict the problem to binary networks.  In fact, the latter two requirements are needed for our algorithm to be correct.

We reduce the \textsc{Restricted 2-IPP} problem to 2-SAT, which given a set of Boolean clauses with two literals each, asks whether there is an assignment that satisfies all clauses.  
For our purposes, it is sufficient to express our 2-SAT instances as constraints of the form $(x = y)$ or $(x \neq y)$, where $x$ and $y$ are literals (i.e., $x, y$ are variables or their negation), and where these constraints require the literals to be either equal or distinct, respectively.  In 2-SAT, $(x = y)$ is equivalent to having the clauses $(\neg x \vee y)$ and $(x \vee \neg y)$, and $(x \neq y)$ is equivalent to having the two clauses $(x \vee y)$ and $(\neg x \vee \neg y)$.

Given a semi-binary DAG $N$ and four vertices $s_1, s_2, t_1, t_2$, we create a Boolean variable $x_v$ for each $v \in V(N)$.  The variable $x_v$ is interpreted to be $true$ when $v$ belongs to $P_1$, and $false$ when $v$ belongs to $P_2$.  Using this variable representation, the goal is to assign each vertex to a path while satisfying all requirements of leaf IPPs.  Our 2-SAT instance is then obtained by adding the following set of constraints:

\begin{enumerate}
\item 
\emph{leaves and roots are in distinct paths}: add the constraints $(x_{s_1} \neq x_{s_2})$ and $(x_{t_1} \neq x_{t_2})$.

\item 
\emph{roots are roots, leaves are leaves}: for $i \in \{1,2\}$, and for each in-neighbor $w$ of $s_i$, add the constraint $(x_w \neq x_{s_i})$.  Then for each out-neighbor $w$ of $t_i$, add the constraint $(x_w \neq x_{t_i})$.

\item 
\emph{forced successors}: let $v \neq t_1, t_2$ be a vertex of $N$ with a single out-neighbor $w$.  Add the constraint $(x_v = x_w)$.
\item 
\emph{exactly one successor}: let $v \neq t_1, t_2$ be a vertex with two out-neighbors $u, w$.  Add the constraint $(x_u \neq x_w)$.  

\item 
\emph{exactly one predecessor}: let $v \neq s_1, s_2$ be a vertex with two in-neighbors $u, w$.  Add the constraint $(x_u \neq x_w)$.
\end{enumerate}
Note that we have not modeled the constraint that vertices with a single parent should be forced to be equal, since this is implied by the other constraints.  We show that this reduction is correct and leads to a polynomial time algorithm.

\begin{theorem}
    The \textsc{Restricted 2-IPP} problem can be solved in time $O(|V(N)|)$ on a semi-binary DAG $N$.
\end{theorem}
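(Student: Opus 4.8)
The plan is to prove correctness of the reduction in both directions, establishing that a satisfying assignment of the 2-SAT instance corresponds exactly to a valid Restricted 2-IPP partition, and then argue the linear time bound. The key insight is that the five constraint types collectively encode precisely the local structural conditions that a pair of vertex-disjoint covering induced paths with the specified endpoints must satisfy at every vertex.

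\textbf{Forward direction.} First I would suppose that $P_1, P_2$ form a valid solution and define the assignment $x_v = true$ iff $v \in P_1$. I would verify each constraint family in turn. Constraints (1) hold because the four specified vertices are the two starts and two ends, which must lie in different paths. For (2), since $s_i$ is a start of its path, it has no predecessor within that path, so any in-neighbor $w$ must be in the other path; symmetrically for $t_i$ and its out-neighbors. For (3), a vertex $v \neq t_1, t_2$ is internal to its path and thus has a successor in the path; with only one out-neighbor $w$, that successor must be $w$, forcing $x_v = x_w$. For (4), a vertex $v$ with two out-neighbors $u, w$ uses exactly one as its path-successor (it is not an endpoint); the \emph{other} out-neighbor cannot be in the same path, since otherwise the induced condition would be violated (the arc to that out-neighbor would be a chord), so $u, w$ lie in different paths, giving $(x_u \neq x_w)$. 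Constraint (5) is the dual argument using the two in-neighbors of a vertex that is not a start.

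\textbf{Reverse direction.} This is the more delicate part. Given a satisfying assignment, I would let $P_1 = \{v : x_v = true\}$ and $P_2 = \{v : x_v = false\}$, and must show each $P_i$ is an induced directed path from the correct start to the correct end. The argument proceeds by showing that within each color class, every vertex except the designated endpoint has exactly one out-neighbor of the same color, and every vertex except the designated start has exactly one in-neighbor of the same color. Constraints (3)--(4) guarantee the out-neighbor count is exactly one for non-sink vertices; constraint (5) together with the implicit single-predecessor case guarantees the in-neighbor count; constraint (2) ensures the designated starts genuinely have no same-color predecessor and the ends no same-color successor. From these degree conditions, each $P_i$ decomposes into vertex-disjoint directed paths, and since acyclicity forbids cycles and each class has exactly one source ($s_i$) and one sink ($t_i$) of the right color, each class is a single path. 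The \emph{induced} property follows because constraints (4) and (5) precisely forbid the chord configurations: any extra arc within a color class would create a vertex with two same-color out-neighbors or two same-color in-neighbors, contradicting the constraints.

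\textbf{The hard part} will be rigorously arguing that the degree conditions force each color class to be a \emph{single connected} induced path rather than a disjoint union of paths, and confirming that the semi-binary hypothesis is exactly what makes the local (per-vertex) 2-SAT constraints sufficient. The semi-binary assumption bounds in- and out-degrees by $2$, which is what allows each ``at most one same-color neighbor'' condition to be expressed as a single $(\neq)$ clause; without it one would need higher-arity constraints. I would close by observing that the number of constraints is linear in $|V(N)| + |A(N)| = O(|V(N)|)$ for semi-binary $N$, and that 2-SAT is solvable in linear time in the size of the formula, yielding the claimed $O(|V(N)|)$ bound.
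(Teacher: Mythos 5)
Your high-level route is the same as the paper's (prove both directions of the 2-SAT reduction, then invoke a linear-time 2-SAT solver), and your forward direction is fine. The genuine gap is in the reverse direction, exactly at the step you yourself flag as ``the hard part''. The 2-SAT reduction is \emph{not} correct by itself: the algorithm must first reject any instance in which $N$ has a root other than $s_1,s_2$ or a leaf other than $t_1,t_2$; the paper does this as a preliminary step and invokes it explicitly in the proof. Without that rejection, your claim that constraints (3)--(4) give every vertex other than $t_1,t_2$ exactly one same-color out-neighbor is false: a vertex $v\notin\{t_1,t_2\}$ with outdegree $0$ in $N$ satisfies (3)--(4) vacuously, so its color class can contain a second sink, and the ``one sink, hence one path'' argument collapses. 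Concretely, let $V(N)=\{s_1,s_2,a,v,t_1,t_2\}$ with arcs $(s_1,a),(s_1,v),(s_2,a),(s_2,t_2),(a,t_1)$. This DAG is semi-binary, and no valid partition exists, since $v$ has outdegree $0$ and so must be the last vertex of any path containing it, yet the paths must end at $t_1,t_2$. Nevertheless, the assignment $x_{s_1}=x_a=x_{t_1}=true$, $x_{s_2}=x_v=x_{t_2}=false$ satisfies every clause, so the implication ``satisfying assignment implies valid partition'' that you set out to prove is simply false for the reduction as you analyze it.

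A secondary problem is circularity: you justify ``every non-start vertex has exactly one same-color in-neighbor'' by constraint (5) ``together with the implicit single-predecessor case''. But the instance contains no single-predecessor clauses (the paper deliberately omits them as unnecessary), and (5) only yields ``at most one''. That a vertex with a single in-neighbor agrees with that in-neighbor is a \emph{consequence} of the classes being single paths, so it cannot be used as an ingredient to prove single-path-ness. The repair, which is how the paper argues: after rejecting extra roots and leaves, you need only ``at most one same-color in-neighbor'' (constraint (5)), ``at most one same-color out-neighbor'' (constraint (4)), and ``every vertex of the class except $t_i$ has a same-color out-neighbor'' (constraints (3)--(4), now valid because no vertex outside $\{t_1,t_2\}$ is a leaf of $N$). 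Since all in- and out-degrees within a class are at most one and $N$ is acyclic, each class is a disjoint union of directed paths; since only $t_i$ lacks a same-color out-neighbor, there is exactly one such path, and it is induced because no vertex has two same-color out-neighbors. With the rejection step added and the connectivity argument rephrased this way, your proof coincides with the paper's.
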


\vspace{-2mm}

\begin{proof}
    Let $N$ be a semi-binary DAG and $s_1, s_2, t_1, t_2$ be the four given vertices.
    Note that if some vertex $v \neq s_1, s_2$ is a root of $N$, then no IPP with two paths can start with $s_1, s_2$.  Likewise, if $v \neq t_1, t_2$ is a leaf of $N$, no solution is possible.  If one such case arises, we reject the instance, so from now on we assume that $N$ has no roots or leaves other than $s_1, s_2$ or $t_1, t_2$, respectively.
    We next show that our reduction to 2-SAT is correct.
    
    Suppose that $N$ can be partitioned into two induced paths $P_1, P_2$ whose roots are $s_1, s_2$ and whose leaves are $t_1, t_2$.  For each $v \in N(V)$, assign $x_v = true$ if $v \in P_1$, and $x_v = false$ if $v \in P_2$.  We argue that each constraint is satisfied.

    Because $s_1, s_2$ are in different paths, $x_{s_1} \neq x_{s_2}$ holds.  For similar reasons, $x_{t_1} \neq x_{t_2}$ also holds.  
    Moreover, for $i \in \{1,2\}$, as $s_i$ is the start of one of the induced paths, no in-neighbor $w$ of $s_i$ is in the same path as $s_i$.  Therefore, $x_w \neq x_{s_i}$.  Similarly, since $t_i$ has no out-neighbor $w$ in its path, $x_w \neq x_{t_i}$. 
    
    Let $v$ be a vertex other than $t_1, t_2$ with a single out-neighbor $w$.  Since $v$ must have a successor in its path, $v$ and $w$ must be in the same path and thus $x_v = x_w$, thereby satisfying the forced successor constraint.  
    Suppose that $v$ has two out-neighbors $u, w$.  Since $v \neq t_1, t_2$, it has some out-neighbor in its path, and in fact exactly one out-neighbor since the paths are induced.  It follows that $u$ and $w$ are in distinct paths and $x_u \neq x_w$.
    Finally, suppose that $v \neq s_1, s_2$ has two in-neighbors $u, w$.  Exactly one of them must be in the same path as $v$ (not both, because of the induced property), 
    and so again $x_u \neq x_w$.  We deduce that our assignment satisfies our 2-SAT instance.

Conversely, suppose that some assignment of the $x_v$ variables satisfies the 2-SAT instance. 
We claim that $P_1 = \{ v : x_v = true \}$ and $P_2 = \{ v : x_v = false \}$ form an induced path partition of $N$.  
These sets clearly partition $V(N)$.   Note that $x_{s_1} \neq x_{s_2}$ implies that $s_1$ is in one path and $s_2$ in the other.  Without loss of generality, we assume that $s_1 \in P_1, s_2 \in P_2$.  Also note that $x_{t_1} \neq x_{t_2}$ implies that $t_1, t_2$ are in different paths, although we do not assume which is in which.  
 Let $t_i$ be the vertex in $P_1$.  We argue that $P_1$ is an induced path that starts at $s_1$ and ends at $t_i$ (the proof for $P_2$ is identical).

First note that because $x_w \neq x_{s_1}$ for every in-neighbor $w$ of $s_1$, no such in-neighbor is in $P_1$.  Likewise, $t_i$ has no out-neighbor in $P_1$ because of the constraints $x_w \neq x_{t_i}$.
Let $v \in P_1 \setminus \{t_i\}$.  Note that because we initially checked that only $t_1, t_2$ could be leaves of $N$, $v$ is not a leaf of $N$.
If $v$ has a single out-neighbor $w$ in $N$, then $x_v = x_w$ and $w$ is also in $P_1$.
If $v$ has two out-neighbors $u, w$, because $x_u \neq x_w$, exactly one of $x_u, x_w$ is $true$ and is in $P_1$.  Thus, every vertex in $N[P_1]$ has a single out-neighbor, except $t_i$ which has no out-neighbor.  

Next let $v \in P_1 \setminus \{s_1\}$.  
If $v$ has two in-neighbors $u, w$ in $N$, exactly one of them is in $P_1$ because of $x_u \neq x_w$.  It follows that each vertex of $P_1$ has at most one in-neighbor in $N[P_1]$, except $s_1$.

Because in $N[P_1]$, every vertex has in-degree and out-degree at most $1$, and because $N$ is acyclic, $N[P_1]$ is a collection of paths.  There can only be one such path because, as we argued, every vertex except $t_i$ has an out-neighbor in $P_1$.  Moreover, $P_1$ is induced because none of its vertices has two out-neighbors in $P_1$.  

By the same arguments, $P_2$ induces path, and therefore $\{P_1, P_2\}$ is a partition of $N$ into two induced paths, such that $P_1$ starts at $s_1$ and ends at $t_i$, and $P_2$ starts at $s_2$ and ends at the other $t_j$ vertex.

It only remains to justify the complexity.  Our 2-SAT instance contains $O(|V(N)|)$ variables and clauses, since each vertex generates $O(1)$ clauses.  Then, we can use a linear-time algorithm~\cite{aspvall1979linear} to solve the 2-SAT instance.
\end{proof}

If $s_1, s_2, t_1, t_2$ are not known in advance, we can simply guess them, which leads to the following.

\begin{corollary}
    Let $N$ be a semi-binary DAG.  Then we can decide whether $N$ can be partitioned into two induced paths in time $O(|V(N)|^3)$. 

    Moreover, if $N$ has two leaves, we can decide whether $N$ admits a leaf IPP in time $O(|V(N)|^2)$.
\end{corollary}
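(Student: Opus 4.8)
The plan is to reduce both problems to \textsc{Restricted 2-IPP} by guessing the four terminal vertices $s_1, s_2, t_1, t_2$ and invoking the preceding $O(|V(N)|)$-time algorithm on each guess. A naive enumeration of all four vertices would require $O(|V(N)|^4)$ guesses, for a total of $O(|V(N)|^5)$, so the crux of meeting the claimed bounds is to sharply limit the number of tuples we try.

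First I would record the following observation. In any partition of $N$ into two induced paths, every root of $N$ must be the start of one of the two paths (a root has no in-neighbor, so it can be neither an interior vertex nor the end of a path), and symmetrically every leaf of $N$ must be the end of one of the two paths. Since a nonempty DAG has at least one root and at least one leaf, it follows that if $N$ has more than two roots, or more than two leaves, we may immediately reject. Otherwise the set $\{s_1, s_2\}$ is forced to contain all roots: if $N$ has two roots then $\{s_1, s_2\}$ is completely determined, while if $N$ has a single root $r$ then one start is $r$ and the other start ranges over the $O(|V(N)|)$ remaining vertices. The same dichotomy applies to $\{t_1, t_2\}$ through the leaves.

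This yields the first claim. In the worst case, in which $N$ has exactly one root and exactly one leaf, there are $O(|V(N)|)$ choices for the free start and $O(|V(N)|)$ choices for the free end, hence $O(|V(N)|^2)$ candidate tuples in total; each tuple (discarding those in which $s_1, s_2, t_1, t_2$ fail to be distinct) is tested in $O(|V(N)|)$ time, and $N$ is partitionable into two induced paths precisely when some tuple succeeds, giving the $O(|V(N)|^3)$ bound. For the second claim, when $N$ has exactly two leaves and we seek a leaf IPP, both paths are required to end at leaves, so $\{t_1, t_2\}$ is forced to be the pair of leaves and no guessing is needed for the ends. Only the starts remain free, costing $O(|V(N)|)$ tuples, each tested in $O(|V(N)|)$ time, for an overall running time of $O(|V(N)|^2)$.

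The one step that really matters is this reduction in the number of guesses: it is the root/leaf observation that brings the enumeration down from $O(|V(N)|^4)$ to $O(|V(N)|^2)$ (and to $O(|V(N)|)$ in the two-leaf case), which is exactly what makes the stated complexities achievable. The remaining details—rejecting instances with too many roots or leaves, enforcing distinctness of the four guessed vertices, and noting that a successful \textsc{Restricted 2-IPP} solution whose ends are the leaves is by definition a leaf IPP—are routine.
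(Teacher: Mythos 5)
Your proposal is correct and follows essentially the same route as the paper: both reject when there are more than two roots or two leaves, use the observation that roots must start paths and leaves must end them to pin down the forced terminals, and then enumerate only the free start and/or end vertices, invoking the $O(|V(N)|)$-time \textsc{Restricted 2-IPP} algorithm on each candidate tuple. The case analysis and the resulting $O(|V(N)|^3)$ and $O(|V(N)|^2)$ bounds match the paper's argument exactly.
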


\begin{proof}
    We may assume that $N$ has at most two roots and at most two leaves, otherwise no IPP with two paths is possible.
    Since $N$ is a DAG, it has at least one root $s_1$ and one leaf $t_1$, which must start and end some path.  If $N$ has another root $s_2$ and another leaf $t_2$, they must also start and end a path, and it suffices to run our algorithm for \textsc{Restricted 2-IPP} on the four vertices.

    If $N$ does not have another root but has another leaf $t_2$, we iterate over every vertex that we label as $s_2$ and, for each such vertex, we run our algorithm for \textsc{Restricted 2-IPP}.  If $N$ can be partitioned into two induced paths, there exists a value of $s_2$ on which the algorithm returns a positive answer and we will find it.  This solves the case where $N$ has two leaves in time $O(|V(N)|^2)$.  

    The same complexity can be achieved if $N$ has another root but no other leaf, by iterating over every possible $t_2$.  If $N$ has only one root and one leaf, we iterate over all the $O(|V(N)|^2)$ combinations of $s_2, t_2$ and run our algorithm for \textsc{Restricted 2-IPP}, for a total time of $O(|V(N)|^3)$.  
\end{proof}

\subsection{Orchard networks}

As mentioned in the introduction, tree-based phylogenetic networks were first
introduced as phylogenetic networks that can be obtained from a 
rooted tree $T$ by adding some arcs between some 
of the vertices of $T$ \cite{francis2015phylogenetic} 
(in the terminology introduced above $T$ is a subdivision-tree for $N$).
As we have seen, it is NP-complete to decide if a 
binary, tree-based phylogenetic network is forest-based.
The main difficulty in recognizing when a 
tree-based phylogenetic network is forest-based 
occurs when some of these extra arcs are between ancestors and 
descendants in the subdivision-tree.  Indeed, if every extra arc is 
between incomparable vertices of the subdivision-tree, then it 
is easy to partition the subdivision-tree into 
induced paths while ignoring these extra arcs.  

This suggests that tree-based phylogenetic networks 
with ``time-consistent lateral arcs'' should be forest-based. 
Interestingly, such phylogenetic networks 
are precisely defined in~\cite{van2022orchard}, where it is  
shown that they correspond to a special class of phylogenetic networks
called \emph{orchard networks} \cite{erdHos2019class}.
The authors in~\cite{van2022orchard} also show that, by 
allowing non-binary orchard phylogenetic networks, 
one obtains a class of networks that is strictly 
broader than time-consistent tree-based networks.  
We now extend orchard networks even further to the 
DAG setting, and show that all such networks are forest-based.

Let $N$ be a DAG with no subdivision vertex in which all leaves have in-degree $1$ (with $N$ not necessarily binary, single-rooted, nor connected).  A \emph{cherry} of $N$ is a pair of distinct leaves $(x, y)$ such that, if $x'$ and $y'$ are the respective in-neighbors of $x$ and $y$, 
either $x' = y'$, or $y'$ is a reticulation and $(x', y') \in A(N)$.  When $x' = y'$, $(x, y)$ is called a \emph{standard cherry}, and in the second case $(x, y)$ is called a \emph{reticulated cherry}.
The \emph{cherry-picking operation} on cherry $(x, y)$ transforms $N$ as follows:
if $(x, y)$ is a standard cherry, remove $y$ and its incident arc, and suppress the possible resulting subdivision vertex; 
if $(x, y)$ is a reticulated cherry, remove the arc $(x', y')$ and suppress the possible resulting subdivision vertices. In case $N$ is a binary phylogenetic network, this 
definition agrees with the operation proposed for the original orchard networks in \cite[p.35]{erdHos2019class}.

\begin{figure}
    \centering
    \includegraphics[width=.9\textwidth]{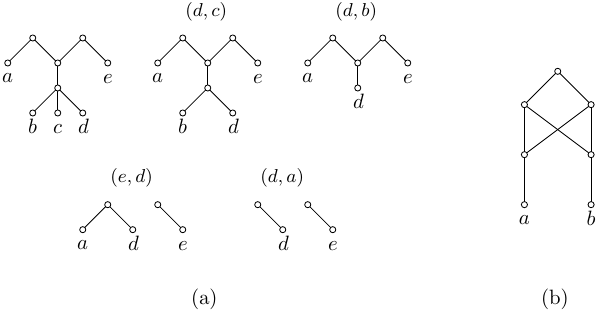}
    \caption{(a) A network $N$ reduced by a sequence of four cherry-picking operations.  The pairs on top indicate the operations performed to obtain the network (all arcs point downwards).
    (b) A forest-based network that is not an orchard.}
    \label{fig:cherrypick}
\end{figure}

A DAG is \emph{reduced} if each of its connected components has only one arc, whose endpoints are a root and a leaf.  
A DAG $N$ is \emph{reducible} if there exists a sequence of cherry-picking operations that can be applied to $N$ to transform it into a reduced DAG.  
We say that such a sequence \emph{reduces} $N$.
A network $N$ is a \emph{orchard} if there is a sequence of cherry-picking operations that reduces it.   
See Figure~\ref{fig:cherrypick}.a for an example with two roots.  Also, note that not every forest-based network is 
orchard.  The network in Figure~\ref{fig:cherrypick}.b is not an orchard network since 
it contains no cherry, but it clearly admits a leaf induced path partition.

\begin{theorem}
All reducible DAGs admit a leaf induced path partition.  Consequently, 
all orchards are forest-based.
\end{theorem}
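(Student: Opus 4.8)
The plan is to prove the first statement---that every reducible DAG admits a leaf IPP---since the second then follows immediately: an orchard is by definition a reducible network, so it has a leaf IPP, and by Theorem~\ref{thm:fb-link}(ii) this is equivalent to being forest-based. For the first statement I would argue by induction on the length $k$ of a sequence of cherry-picking operations that reduces $N$. When $k=0$, the DAG $N$ is already reduced, so each of its connected components consists of a single arc from a root to a leaf; this arc is an induced path ending at a leaf, and the collection of all such arcs is a leaf IPP. For $k\geq 1$, let $N_1$ be obtained from $N$ by applying the first operation of the sequence; then $N_1$ is reduced by the remaining $k-1$ operations, so by the induction hypothesis $N_1$ admits a leaf IPP $\P$. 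The heart of the proof is a \emph{lifting step}: from $\P$ I would build a leaf IPP of $N$ by reversing the single cherry-picking operation, treating the two types of cherry separately.

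For a \emph{standard cherry} $(x,y)$ with common parent $p$, the forward operation deletes $y$ with its incident arc and possibly suppresses $p$. Reversing it amounts to re-adding the leaf $y$, and reinserting $p$ when it was suppressed (which merely subdivides the arc leaving $p$'s parent $q$, replacing an arc $q\to x$ used in $\P$ by $q\to p\to x$). Since $y$ is a leaf whose only in-neighbour in $N$ is $p$, I would either append $y$ to the end of the path through $p$---when that path already uses the arc into $x$ created by the suppression---or else add $y$ as a new singleton path. In every case the reinserted vertices have all of their path-neighbours consecutive and $y$ has no out-neighbour, so no chord is created and each path still ends at a leaf.

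The \emph{reticulated cherry} case is the main obstacle. Here the forward operation removes the lateral arc $(x',y')$, where $x'$ is the parent of $x$ and $y'$ the reticulation parent of $y$, and suppresses whichever of $x',y'$ becomes a subdivision vertex. Reversing it reinserts these vertices and, crucially, restores the arc $(x',y')$; the danger is that $(x',y')$ becomes a \emph{chord} of some induced path rather than a harmless arc between two different paths. My key observation is that $x$ and $y$ are distinct leaves, hence lie on distinct paths of any leaf IPP, so it suffices to place $x'$ on the path ending at $x$ and $y'$ on the path ending at $y$. Reinserting $y'$ immediately before $y$ is always safe, because $y'$ has out-degree one and so is forced onto the path that ends at $y$. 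For $x'$ I would route its path to its child $x$: if $x'$ already has $x$ as its successor in $\P$ (in particular whenever $x'$ has out-degree one in $N_1$) there is nothing to do, and otherwise $x$ is a singleton in $\P$ while $x'$ leaves along some other child $c$, in which case I redirect $x'$ to $x$ and detach the old suffix $c,\ldots,y$ as a separate induced path ending at $y$. With $x'$ on the $x$-path and $y'$ on the $y$-path, the restored arc $(x',y')$ joins two distinct paths and is therefore lateral, so every path remains induced; all four combinations of which of $x',y'$ were suppressed are handled uniformly by this placement.

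The step I expect to require the most care is verifying inducedness in this last case---confirming that redirecting $x'$ keeps both the modified $x$-path and the detached $c,\ldots,y$ path induced, and that the only genuinely new adjacency, the arc $(x',y')$, is always lateral. The remaining work is routine bookkeeping: checking the suppression sub-cases and the boundary situations in which a suppressed vertex is incident to a root, and observing that the leaf set, and hence the number of paths, is preserved by each reverse operation so that the lifted partition is indeed a leaf IPP of $N$.
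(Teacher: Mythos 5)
Your proposal follows the same route as the paper's proof: induction on the length of the reducing sequence, a lifting step that reverses the first cherry-pick, and, for the reticulated cherry, the key argument that $x'$ can be routed onto the path ending at $x$ (redirecting and detaching a suffix when $x$ is a singleton) and $y'$ placed just before $y$, so that the restored arc $(x',y')$ joins two \emph{distinct} paths and creates no chord. That case matches the paper essentially step for step (one slip: the detached suffix ends at whatever leaf the old path ended at, not necessarily $y$, but nothing depends on this).

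The standard-cherry case, however, does not go through as written, in either branch. First, when $p$ is suppressed you assert that the arc $q\to x$ created by the suppression is ``used in $\P$''. The induction hypothesis only hands you \emph{some} leaf IPP of $N_1$, and it may well avoid this arc: if $q$ has another out-neighbour $c$, the partition may route $q$'s path through $c$ and leave $\{x\}$ as a singleton (e.g.\ take a root $q$ with children $p,c$, where $p$ has children $x,y$ and $c,x,y$ are leaves; after picking $(x,y)$, the partition $\{q\to c\},\{x\}$ is a valid leaf IPP of $N_1$). In that situation your reinsertion of $p$ is undefined, since you never say which path $p$ joins. Second, the branch ``append $y$ to the end of the path through $p$, when that path already uses the arc into $x$'' prescribes an impossible move: if $p$'s path uses the arc into $x$, that path ends at the leaf $x$, and since $y$'s only in-neighbour is $p$, the leaf $y$ can never be appended to the end of any path. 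Your two branches are effectively swapped: when the arc into $x$ \emph{is} used, the only correct move is to add $\{y\}$ as a singleton; when it is \emph{not} used (so $x$ is a singleton and $p$ is unplaced), you can form the new two-vertex path $p\to y$, or---as the paper does, and as you yourself do for $x'$ in the reticulated case---split $q$'s path after $q$, append $x$, detach the suffix as its own path, then subdivide with $p$ and add $\{y\}$ as a singleton. The repairs are easy and use only ideas already present in your reticulated case, but the standard-cherry step as stated fails.
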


\begin{proof}
	We use induction on the number of cherry-picking operations needed to reduce $N$.  If $N$ can be reduced with $0$ operations, then every connected component is a path with two vertices and $N$ trivially admits a leaf induced path partition.  
	Assume that $N$ requires at least one operation to be reduced and that the statement holds for DAGs that require less.  Consider a minimum-length sequence of cherry-picking operations that reduces $N$, and let $(x, y)$ be the first cherry in this sequence.  Let $N'$ be the DAG obtained from $N$ after picking $(x, y)$. 
	Note that $N'$ is reducible in one less operation than $N$.  Therefore, the induction hypothesis can be applied to $N'$ and we may thus assume that it admits a leaf induced path partition $\P'$.  We modify $\P'$ to obtain a 
 leaf induced path partition $\P$ of $N$.
	
	Suppose that $(x, y)$ is a standard cherry of $N$ and let $w$ be the common in-neighbor of $x$ and $y$.  If, after the removal of $y$, $w$ is not a subdivision vertex, then $N'$ has the same vertices as $N$, except $y$ which was removed.  In this case, we take $\P'$ and add the path consisting of $y$ by itself, which partitions $N$ into induced paths as desired.
	
 Otherwise, assume that $w$ is removed from $N'$ because it becomes a subdivision vertex.  This happens only if $w$ has $x$ and $y$ as out-neighbors, and only one in-neighbor $z$.  In $N'$, $z$ has become the in-neighbor of $x$.  Let $P_z \in \P'$ be the path that contains $z$.  
	We claim that we can assume that $P_z$ also contains $x$.
	If $P_z$ does not contain $x$, then $\{x\}$ by itself is a path of $\P'$ since $z$ is its sole in-neighbor.  In this case, let $P_1$ be the subpath of $P_z$ from its first vertex up until $z$, and let $P_2$ be the rest of the $P_z$ path.  In $\P'$, we can replace the two paths $P_z, \{x\}$ with $P_1 \cup \{x\}$, $P_2$, which are easily seen to be induced paths that cover the same vertices.  
	So we assume that $P_z$ uses the arc $(z, x)$.  
 
	Let us now revert the cherry-picking operation $(x, y)$ to go from $N'$ to $N$ by first subdividing $(z, x)$, thereby reinserting $w$ as a subdivision vertex.  
	By replacing $(z, x)$ in $P_z$ by the subpath $z - w - x$, we obtain a perfect induced path partition of the resulting network (since any path other than $P_z$ is unaffected by the subdivision, and because adding $w$ to $P_z$ preserves the induced property as $w$ has a single in-neighbor).  Then, reincorporate $y$ and the arc $(w, y)$.  Any path at this point is still induced, and it suffices to add $\{y\}$ by itself to obtain a leaf induced path partition of $N$.
	
	Suppose that $(x, y)$ is a reticulated cherry, with $x', y'$ the respective parents of $x, y$ and $y'$ a reticulation with $x'$ as an in-neighbor.  
	Let $p$ be the in-neighbor of $x$ in $N'$ and $q$ the in-neighbor of $y$ in $N'$.  Note that $p$ is either equal to $x'$, or $p$ is the in-neighbor of $x'$ in $N$, depending on whether $x'$ was suppressed as a subdivision vertex or not.  The same holds for $q$ and $y'$.  Let $P$ and $Q$  be the paths of $\P'$ that contain $p$ and $q$, respectively.  As before, we claim that we may assume that $x$ is in $P$ and $y$ in $Q$.  
	Indeed, if $x$ is not in $P$, then $x$ is a path by itself in $\P'$.  We can split $P$ in two such that the first subpath ends at $p$, and add $x$ as the out-neighbor of $p$ just as in the previous case.  After performing this replacement if needed, we assume that the arc $(p, x)$ is used by some path, and we can use the same argument to split $Q$ if needed and assume that $y$ is in $Q$ (noticing that applying this will not remove $(p, x)$).  Thus, $(p, x)$ is used by $P$ and $(q, y)$ is used by $Q$.
	
	To obtain a leaf induced path partition $\P$ of $N$, let us reverse the cherry-picking operation from $N'$ to $N$ one step at a time.  
	If $p \neq x'$, first subdivide $(p, x)$ to reincorporate $x'$, and in $P$ replace the arc $(p, x)$ with the subpath $p - x' - x$.  As before, this yields a perfect induced path partition of the resulting network.  If $p = x'$, then leave $P$ intact.  Likewise, if $q \neq y'$, subdivide $(q, y)$ to reincorporate $y'$ and in $Q$ replace $(q, y)$ with $q - y' - y$.  If $q = y'$ leave $Q$ intact.  Let $\P$ be the resulting leaf induced path partition.
	Finally, reinsert the arc $(x', y')$, which results in $N$ (and leave $\P$ unaltered).  If $\P$ now contains a non-induced path, it is because of the arc $(x', y')$, which is a problem only if $x'$ and $y'$ were in the same path.  If this were the case, that path in the previous network would reach $x'$ first then go to $y'$ or vice-versa, which we know does not occur because in their respective paths, the out-neighbor of $x'$ is $x$ and the out-neighbor of $y'$ is $y$.  It follows that $\P$ is a leaf induced path partition of $N$.
\end{proof}

Note that there
are several other well-studied classes of phylogenetic networks (see e.g. \cite{kong2022classes}).
In~\cite{huber2022forest}, the authors established most of the containment relationships of these classes with forest-based networks.  However, the computational complexity of the forest-based recognition problem remains open for several of these classes.  We have shown that the problem is hard on tree-based networks and easy for orchards, but we do not know whether the problem is NP-complete on other classes of interest.  This includes for instance tree-sibling networks, in which every reticulation has a sibling that is a tree-vertex, where a sibling is a vertex with the same parent (such a sibling may help redirecting partially constructed paths that cannot use the reticulation).
Other examples use the notion of \emph{visible} vertices, where a vertex $v$ is visible if there is a leaf such that $v$ is on every path from the root to that leaf.
In tree-child networks, every vertex is visible, and relaxing this condition yields classes on which finding leaf IPPs may be tractable.
One such class consists of reticulation-visible networks, in which every reticulation is visible, and another consists of nearly stable networks, where for each vertex $v$, either $v$ itself is visible, or its parents are visible.  
The complexity on these classes is open even for single-rooted binary networks.

\section{Unrooted forest-based networks}
\label{sec:unrooted}

In this section, we introduce an undirected analogue of
forest-based networks and 
consider some of their properties as compared with their rooted counter-parts. 
Most of the terms that we use for
undirected graphs are standard and similar to those 
used for directed graphs and so we shall not present 
definitions unless we think that clarification could be helpful.

A \emph{leaf} in an undirected graph is a vertex with degree 1.
An {\em unrooted phylogenetic network} is a (simple), connected, undirected graph 
$N=(V,E)$ with non-empty leaf-set $L(N)$, 
and that contains no vertices with degree two \cite{francis2018tree,hendriksen2018tree}.
The network $N$ is {\em binary} if every vertex in $V$ has degree 1 or 3, and
it is {\em tree-based} if it contains a spanning tree with leaf set $L(N)$.
Note that in contrast to
the rooted case, it is NP-complete to decide if a binary unrooted phylogenetic network
is tree-based \cite[Theorem 2]{francis2018tree}.

We now introduce the concept of forest-based unrooted networks.
In analogy with the rooted case, 
we call an unrooted phylogenetic network $N=(V,E)$ {\em forest-based}
if it contains a spanning forest $F$ with leaf set $L(N)$, such 
that every edge in $E \setminus F$ has its ends contained in different connected components of $F$, i.e.,
each tree of $F$ is an induced subgraph of $N$.
Note that, as in for directed networks, every forest-based unrooted phylogenetic 
network is tree-based, but that the converse may not hold. 
For example, we can take the network with vertex set 
$\{x,y,p,q,u,v\}$ and edge set $\{xp,pv,pu,uv,uq,vq,qy\}$,
which has leaf set $\{x,y\}$ and has two possible spanning trees
with leaf set $\{x,y\}$, namely the paths $x,p,v,u,q,y$ and $x,p,u,v,q,y$, neither
of which are induced paths.

\begin{figure}[htb]
    \centering
    \includegraphics[width=0.8\textwidth]{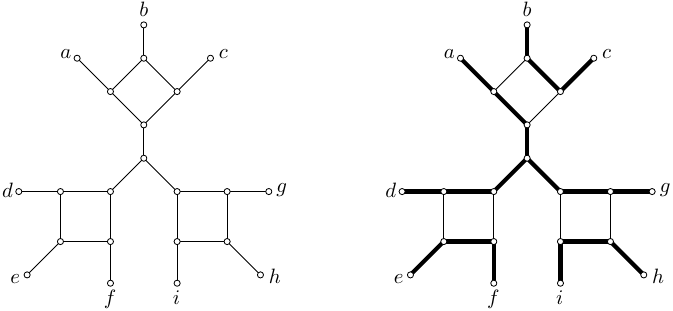}
    \caption{An example of a forest-based network that does not admit a leaf IPP.}
    \label{fig:counter-ex-unrooted}
\end{figure}

Interestingly, if an unrooted phylogenetic network $N$ is forest-based
then, in contrast to directed phylogenetic networks, it does  
not necessarily follow that $N$ contains an induced spanning forest 
with leaf set $L(N)$ that is the union of induced paths.
Notice that in an undirected induced path, the two endpoints of the path are its leaves, unlike the directed induced paths which only contain one leaf (the vertex of outdegree $0$).  The analogous notion of leaf IPP in undirected graphs therefore requires that each path has its \emph{two} endpoints in $L(N)$ (unless the path consists of a single vertex).
Consider for example the unrooted network shown in Figure~\ref{fig:counter-ex-unrooted} on the left.  It contains an induced spanning forest, as shown on the right.  However, it is not too difficult to verify that this network contains no leaf IPP.  Indeed, if we assume that such a leaf IPP exists, the central vertex of this network is adjacent to two vertices.  Thus, one of the neighboring subnetworks of that central vertex must itself admit a leaf IPP, which can be seen to be impossible.

It could be interesting to characterize forest-based 
unrooted phylogenetic networks that do have this property.
Note that if $N$ is tree-based,
then it does have a path partition whose paths all end
in $L(N)$, since we can clearly partition any subdivision tree 
into paths having this property.

Despite the above observation concerning unrooted 
forest-based networks, we can still use path partitions to show that it is NP-complete to
decide whether or not an unrooted network is forest-based as follows.
Suppose that $G=(V,E)$ is a connected, undirected graph. We say that $G$ has 
an {\em induced path partition} 
if its vertex set can be partitioned into
a collection of vertex-disjoint, induced paths in $G$.
In addition, we say that an unrooted phylogenetic network $N$ has 
a {\em leaf induced path partition} if it has an induced path partition
such that every path of length zero in the partition is contained in $L(N)$, and every 
other path in the partition intersects $L(N)$ precisely in its two end vertices. 
Note that any phylogenetic tree has such a partition, and
that path partitions arise in phylogenetic trees 
where they have applications to 
the so-called {\em phylogenetic targeting problem}  \cite{arnold2010polynomial}.

Although unrooted forest-based networks do not necessarily correspond to those admitting a leaf IPP, we show that this holds when the network has four leaves.

\begin{lemma}\label{lem:unrootpipp2}
   Suppose that $N$ is an  unrooted phylogenetic network
    with four leaves that is not a tree. Then $N$ is forest-based if and only 
    if $N$ has a leaf induced path partition containing two paths.
\end{lemma}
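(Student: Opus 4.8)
The plan is to prove both directions of the equivalence for an unrooted phylogenetic network $N$ with exactly four leaves that is not a tree, relating the forest-based property to the existence of a leaf IPP with two paths. Since $N$ has four leaves, a leaf IPP consisting of two paths uses up all four leaves as the endpoints of two nontrivial paths (each path of nonzero length has both ends in $L(N)$), so no leaf can be a length-zero singleton path. First I would handle the easier direction: if $N$ admits a leaf IPP consisting of two induced paths $P_1, P_2$, then these two paths form a spanning forest $F$ with two components, each an induced path, hence each a tree, and together they cover $V(N)$; the leaves of each path are leaves of $N$, so $L(F) = L(N)$, and every non-forest edge joins the two components (since each $P_i$ is induced in $N$). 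Thus $N$ is forest-based with subdivision forest $F = P_1 \cup P_2$. This direction is essentially immediate from the definition of forest-based in the unrooted setting.

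The harder direction is the forward implication: assuming $N$ is forest-based via some induced spanning forest $F$ with $L(F) = L(N)$, I must produce a leaf IPP with exactly two paths. The key structural observation is that since $N$ has four leaves and $F$ is a spanning forest whose leaf set is exactly these four leaves, each tree of $F$ must be a path. Indeed, a tree component of $F$ contributes at least one leaf (degree-one vertex of $F$), and any tree that is not a path has at least three leaves of its own. I would argue that $F$ cannot have a component with three or more leaves, because then the total leaf count would force the remaining structure to be inconsistent with $N$ not being a tree and with having precisely four leaves; more carefully, I would show that a forest-based certificate for a four-leaf non-tree network must consist of exactly two path components, each with two endpoints among the four leaves. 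The main obstacle is ruling out the case where $F$ has a single component (a spanning tree that is not a path) or a component that is a genuine tree with a branching vertex.

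To overcome this, I would leverage the degree constraints of unrooted phylogenetic networks (no degree-two vertices; binary means degrees $1$ or $3$) together with a counting argument on the leaves of $F$. Since every leaf of $F$ that is an internal vertex of $N$ would have to be incident to a non-forest edge leading to a different component of $F$, and since $F$ has leaf set exactly $L(N)$, no internal vertex of $N$ can be a leaf of $F$; hence every degree-one vertex of $F$ lies in $L(N)$. A tree component with $k$ leaves then uses $k$ of the four available network leaves, so the possible leaf-distributions among components are limited. A single spanning tree would have all four leaves of $N$ as its only degree-one vertices, making it a tree with exactly four leaves, and I would show this is incompatible with being an induced subgraph whose complement edges all cross between components (there being only one component). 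Thus $F$ must split into two path components, each being an induced path ending in two leaves of $N$, which is exactly a leaf IPP with two paths.

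Finally, I would note that the requirement that $N$ is not a tree is what forces the existence of non-forest edges and thus the genuine two-component splitting, and that the four-leaf hypothesis is precisely what makes the tree-or-path dichotomy collapse to the path case. I expect the bulk of the writing to be in the careful enumeration of how the four leaves can be partitioned among the components of $F$ and verifying that only the two-path configuration survives the induced-subgraph constraints; the converse direction and the reduction of each path-component to a leaf IPP path are routine once this structural fact is established.
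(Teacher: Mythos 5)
Your backward direction is fine and matches the paper's. In the forward direction, however, your case analysis has a genuine gap. Your two tools are (a) leaf counting (each tree component of $F$ must be a path, since a non-path tree has at least three degree-one vertices, all of which must lie in $L(N)$) and (b) the observation that a single induced spanning tree admits no crossing edges, forcing $N = F$ to be a tree. These correctly dispose of the one-component case, but they do not rule out the remaining leaf distributions $(1,1,2)$ and $(1,3)$. The $(1,1,2)$ configuration is the clearest problem: $F$ could consist of two length-zero components (two leaves of $N$ by themselves) plus one induced path containing all other vertices. Here every component \emph{is} a path, no component has three or more leaves, and crossing edges do exist (the pendant edges at the two singleton leaves), so neither (a) nor (b) applies --- yet this is a three-path partition, not the two-path leaf IPP the lemma demands. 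Similarly for $(1,3)$, a singleton leaf plus a tree with three leaves: your ``no crossing edges'' argument fails because crossing edges exist. Your ``Thus $F$ must split into two path components'' is therefore an unjustified jump; note in particular that your framing ``rule out components with three or more leaves'' misses $(1,1,2)$ entirely, since that configuration has no such component.

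The missing idea, which is how the paper closes exactly these cases, is to use the hypothesis that $N$ is not a tree via cycles: $N$ contains a cycle, and every vertex of that cycle must lie in the unique large component of $F$, because the vertices outside that component are degree-one leaves of $N$ and cannot lie on a cycle; but that component is an induced path (or induced tree) in $N$, which contains no cycle --- a contradiction. Equivalently, one can argue that in the $(1,1,2)$ and $(1,3)$ configurations the crossing edges can only be the pendant edges at the singleton leaves, so $N$ is the large component plus pendant edges and hence a tree, contradicting the hypothesis. You gesture at this (``inconsistent with $N$ not being a tree'') but only carry it out for the single-spanning-tree case; an argument of this type is needed for every configuration containing singleton components, and your leaf counting alone cannot supply it.
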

\begin{proof} 
If $N$ has a leaf induced path partition containing two paths, then
clearly $N$ is forest-based.

Conversely, suppose that $N$ is forest-based,
and that $F$ is an induced spanning forest in $N$ with 
leaf set $L(N)$.  Consider the number of connected components of $F$.
We see that $F$ cannot contain four connected components, since these could only 
be four paths of length $0$, all being elements of $L(N)$ (and since unrooted 
networks are connected by definition, there must be at least one vertex other than the leaves).

So, suppose that $F$ contains three connected components. Then two of
these components must be paths of length 0 (i.e. elements in $L(N)$)
and one of the components is an induced path $P$. Now, as $N$ is not a tree
it contains a cycle. But then every vertex in the cycle 
must be contained in the path $P$, as $P$ contains all vertices except two leaves, which is impossible as it
would contradict $P$ being an induced path.

Now, suppose that $F$ contains two connected components.
If these two components are paths, then 
$N$ has a perfect induced path partition containing two paths.
Otherwise, one of the components is an element in $L(N)$.
But then the other component in $F$ must be a tree with
three leaves, and it can be seen that this is not possible 
using a similar argument to the one used in the last paragraph (that is, all of 
the cycles in $N$ must be in that tree, a contradiction).

Finally, again using a similar argument, it follows that 
since $N$ is not a tree, $F$ cannot contain one connected component.
\end{proof}

In the following
we will make use of the proof of the following result \cite[Theorem 1]{le2003splitting} which 
we state using our terminology.

\begin{theorem}\label{thm:splitting}
Suppose that $G$ is an undirected graph.
Then it is NP-complete to decide whether or not $G$
has an induced path partition containing precisely two paths.
\end{theorem}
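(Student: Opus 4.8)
This is a known theorem of Le et al.~\cite{le2003splitting}, and since the subsequent unrooted reduction reuses the graph it produces, the plan is to reconstruct (an adaptation of) their reduction in the present terminology. The first observation is that partitioning $V(G)$ into exactly two induced paths is the same as $2$-colouring $V(G)$ so that each colour class induces a connected, chordless subgraph in which every vertex has degree at most $2$, that is, an induced path. Because swapping the two colours leaves this condition invariant, a symmetric satisfiability problem is a natural source, and I would reduce from a variant of 3-SAT (for instance NAE-3-SAT).

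The construction has three ingredients. First, a rigid \emph{frame} subgraph whose only two-induced-path partitions force the two paths to enter and leave the gadgets in a controlled way and in a fixed order; the key leverage here is that any vertex of degree at least $3$ is heavily constrained, since in whichever path it lies at most two of its neighbours can also lie on that path and those must be its immediate path-neighbours, so every other neighbour is forced into the opposite path. Second, a \emph{variable gadget} for each $x_i$ admitting exactly two local routings of the frame paths, encoding $x_i$ true or false according to which path occupies a designated literal vertex. Third, a \emph{clause gadget} for each $C_j$, attached to the literal vertices of its variables, that the two paths can cover without creating a chord precisely when the clause is satisfied; here the chordless requirement performs the logical test, since a forbidden chord would appear exactly when no literal of the clause takes its satisfying value.

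I would carry out the argument in the order: (1) analyse the frame and show that every two-induced-path partition routes the paths through it in one of a small controlled family of ways; (2) establish the true/false dichotomy inside each variable gadget; (3) attach the clause gadgets and verify the correspondence with satisfiability in both directions; (4) note that the reduction has polynomial size and that membership in NP is immediate, since a candidate partition is checked in polynomial time by verifying that each class induces a path.

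The main obstacle is the tension between \emph{local} and \emph{global} inducedness: each gadget must internally forbid the unsatisfied configuration by forcing a chord, yet the two global paths must still thread through all gadgets consistently and cover every vertex without any unintended chord arising \emph{between} gadgets. Ruling out illegal partitions, namely those that split a gadget incorrectly or that exploit an unforeseen chord to meet the degree and connectivity constraints cheaply, is the delicate part of the proof, and it is where the bulk of the case analysis lies.
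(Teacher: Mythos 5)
The paper does not actually prove this statement: Theorem~\ref{thm:splitting} is imported directly from \cite[Theorem 1]{le2003splitting}, and the only thing the paper later extracts from that external proof is a structural property of the hard instances (minimum degree at least $3$, used in the Turing reduction of Theorem~\ref{thm:unroothard}). Your identification of the source and of the reduction base is accurate --- the paper's closing remarks confirm that the reduction in \cite{le2003splitting} is from NAE-3-SAT --- so, as a matter of matching the paper, a citation is all that was required here.

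Judged as a proof, however, your reconstruction has a genuine gap: none of the three gadgets is actually constructed. The frame, variable, and clause gadgets are specified only by the properties you need them to have (``admits exactly two local routings'', ``can be covered without creating a chord precisely when the clause is satisfied''), which restates the goal rather than establishing it, and you explicitly defer the case analysis ruling out illegal partitions --- which is precisely where all the difficulty of such a reduction lives. There is also a technical error in your opening equivalence: a colour class inducing a connected, chordless subgraph of maximum degree at most $2$ may be an induced \emph{cycle} rather than a path, so the characterization must additionally require acyclicity (equivalently, that the class contains a vertex with at most one neighbour inside the class). As it stands, the proposal is a plausible plan for re-deriving the theorem of Le et al., not a proof of it; the correct move --- and the one the paper makes --- is simply to cite the result and record the property of its hard instances that the later argument needs.
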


More specifically, we will make use of the difficult instances defined in the proof of
\cite[Theorem 1]{le2003splitting} which,  as can quickly be seen by inspecting the 
construction, consist of graphs with minimum degree at least $3$.  
Although the difficult graphs from Theorem~\ref{thm:splitting} do not have leaves, we can argue that if the forest-based recognition problem admitted a polynomial-time algorithm, we could call it multiple times to determine whether such a graph $G$ could be split into two induced paths, by adding four extra leaves at every possible location.  Recall that such a reduction, that requires multiple calls to a supposed polynomial time algorithm, is called a \emph{Turing reduction}.

\begin{theorem}\label{thm:unroothard}
    It is NP-complete (under Turing reductions) to decide whether or not an  
    unrooted phylogenetic network is forest-based.
\end{theorem}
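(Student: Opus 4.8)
The plan is to give a Turing reduction from the problem of Theorem~\ref{thm:splitting}, namely deciding whether an undirected graph $G$ with minimum degree at least $3$ can be partitioned into exactly two induced paths. The key conceptual point is that we want to reuse the machinery of Lemma~\ref{lem:unrootpipp2}, which equates being forest-based with having a leaf IPP consisting of two paths, but that lemma speaks about networks with four leaves, whereas the hard graphs $G$ from Theorem~\ref{thm:splitting} have no leaves at all. So the first step is to explain how to attach leaves to $G$ so that a two-path induced path partition of $G$ corresponds to a two-path \emph{leaf} induced path partition of the augmented network.

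First I would note that in a two-path induced path partition of $G$, each of the two induced paths has two endpoints, giving four ``terminal'' vertices in total. A leaf IPP of an unrooted network requires each nontrivial path to meet $L(N)$ exactly in its two endpoints, so we must ensure the four endpoints of the two paths become leaves. Since we do not know in advance which four vertices are the endpoints, the natural move is to try all possibilities: for each choice of four distinct vertices $w_1,w_2,w_3,w_4$ of $G$ (paired as endpoints of the two prospective paths), form a network $N$ by pendant-attaching a new degree-one leaf to each of $w_1,w_2,w_3,w_4$. This yields $O(|V(G)|^4)$ candidate networks, each of polynomial size, and each is a genuine unrooted phylogenetic network (connected, no degree-two vertices, provided $G$ itself has minimum degree $\ge 3$ so that attaching one pendant edge keeps every original vertex of degree $\ge 3$ and the new leaves have degree $1$). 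Crucially, since $G$ has a cycle whenever it is not a tree — which holds because minimum degree $\ge 3$ forces a cycle — each such $N$ is not a tree, so Lemma~\ref{lem:unrootpipp2} applies.

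Next I would argue the correctness of the many-to-one search. If $G$ admits a partition into two induced paths $P_1,P_2$, then taking $w_1,w_2$ to be the endpoints of $P_1$ and $w_3,w_4$ the endpoints of $P_2$ and attaching leaves there produces a network whose two paths, extended into the attached leaves, form a leaf IPP with two paths; by Lemma~\ref{lem:unrootpipp2} this $N$ is forest-based, so at least one of our polynomially many calls returns ``yes.'' Conversely, if for some choice the network $N$ is forest-based, then by Lemma~\ref{lem:unrootpipp2} it has a leaf IPP with two induced paths; deleting the four attached pendant leaves leaves two induced paths that together cover all of $V(G)$, so $G$ splits into two induced paths. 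Hence $G$ is a ``yes'' instance of the problem in Theorem~\ref{thm:splitting} if and only if at least one of the $O(|V(G)|^4)$ constructed networks is forest-based. A polynomial-time algorithm for forest-based recognition would therefore decide the splitting problem in polynomial time, establishing NP-hardness under Turing reductions; membership in NP is immediate since a candidate spanning forest $F$ with the required induced and leaf-set properties can be checked in polynomial time.

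The main obstacle I anticipate is the bookkeeping that guarantees the attachment of the four pendant leaves exactly captures the endpoints and does not accidentally create a degree-two vertex or otherwise violate the definition of an unrooted phylogenetic network. One must verify that attaching a single pendant edge to a vertex of degree $\ge 3$ keeps it of degree $\ge 3$ (so no forbidden degree-two vertices appear), that the construction preserves connectivity and the presence of a cycle so Lemma~\ref{lem:unrootpipp2} is legitimately applicable, and that in the backward direction the two endpoints of each induced path in $N$ are precisely the two attached leaves (so that stripping them yields genuine induced paths in $G$ covering every vertex). These are routine but need to be stated carefully, because the equivalence in Lemma~\ref{lem:unrootpipp2} is what does the real work, and its hypotheses ($N$ has four leaves and is not a tree) must be met by every network we feed it.
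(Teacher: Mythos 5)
Your proposal is correct and follows essentially the same route as the paper's proof: a Turing reduction from the two-induced-path partition problem of Theorem~\ref{thm:splitting}, attaching pendant leaves at every choice of four vertices of the minimum-degree-$3$ instance, and invoking Lemma~\ref{lem:unrootpipp2} (whose hypotheses you verify just as the paper does) to pass between forest-basedness of the augmented network and the two-path partition of $G$. The only cosmetic difference is that you speak of the four chosen vertices as ``paired'' endpoints, whereas the construction (in both your argument and the paper's) depends only on the unordered set of four vertices.
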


\begin{proof}
First note that the problem is in NP, since a forest can serve as a certificate that can be verified in polynomial time.
We next show that the problem is NP-complete under Turing reductions, via the problem of partitioning an undirected graph into two induced paths, which we call the 2-path partition problem for the duration of the proof.  Recall that to achieve this, we assume access to a polynomial-time algorithm $A$ that can recognize unrooted forest-based networks, and show that this can be used to solve the 2-path partition problem in polynomial time.

Let $G$ be an instance of 2-path partition, where $G$ is assumed to be of minimum degree at least $3$.  In particular, $G$ has no leaves.
We may assume that for every vertex $v \in V(G)$, the graph $G - v$ obtained by removing $v$ is not an induced path, since such instances are easy to recognize in polynomial time.  Therefore, if $G$ can be partitioned into two induced paths, these paths have at least two vertices.
Let $Q = \{w,x,y,z\}$ be a set of four distinct vertices of $G$.  
Define the graph $G(Q)$ as follows: for every $u \in Q$, create a new vertex $u'$, and add the edge $u' u$. 
In other words, attach new leaves adjacent to $w', x', y', z'$ to $w, x, y, z$, respectively. 

For each subset $Q$ of four distinct vertices of $G$, execute $A$ on input $G(Q)$.  
If there is at least one $G(Q)$ that is forest-based according to $A$, then we return that $G$ can be partitioned into two induced paths.  Otherwise we return that no such partition exists.  

Clearly, this procedure runs in polynomial time if $A$ does run in polynomial time.  
We argue that it decides the instance $G$ correctly, by showing that $G$ admits an induced 2-path partition if and only if at least one $G(Q)$ is forest-based.
Suppose that $G$ can be split into two induced paths $P_1, P_2$.  
By our previous remark, $P_1$ and $P_2$ have at least two vertices each.
Let $w, x$ (resp. $y, z$) be the ends, i.e. the vertices of degree $1$, in $P_1$ (resp. in $P_2$).
Let $Q = \{w, x, y, z\}$.  Then $G(Q)$ admits a perfect induced path partition, namely $P_1 \cup \{w', x'\}$ and $P_2 \cup \{y', z'\}$, because extending the ends of the paths with an extra degree $1$ vertex preserves the induced property.  
Thus the above procedure correctly returns yes.  

Conversely, suppose that $G(Q)$ is forest-based for some $Q = \{w,x,y,z\}$.  
Note that because $G$ is assumed 
to have minimum degree 3, $w', x', y', z'$ are the only leaves of $G(Q)$
and $G(Q)$ is not a tree.  By Lemma~\ref{lem:unrootpipp2}, 
the vertices of $G(Q)$ can be split into two induced paths $P_1, P_2$, whose four ends are the leaves.  Say that the ends of $P_1$ are $w', x'$ and the ends of $P_2$ are $y', z'$.  Then $P_1 \setminus \{w', x'\}$ and $P_2 \setminus \{y', z'\}$ are induced paths of $G$.
\end{proof}

Observe that the hard instances generated in~\cite{le2003splitting} have unbounded degree.  The reduction is from NAE-3-SAT, and the maximum degree depends on the 
maximum number of occurrences of a variable in the Boolean formula.  It is plausible that by taking hard satisfiability instances with bounded variable occurrences, one could obtain hardness for induced 2-path partition with maximum degree bounded by a constant.  
However, this constant is likely to be higher than $3$, and novel ideas are needed to establish the complexity of recognizing \emph{binary} undirected forest-based networks.

\section{Discussion}

In this work, we have studied algorithmic problems of interest in two active research areas.  Indeed, forest-based networks and their variants will require further investigation in phylogenetics, whereas leaf induced path partitions give rise to novel problems in graph algorithms.
We were able to answer two open questions from both communities, namely that forest-based networks are hard to recognize, and that partitioning a binary DAG into a minimum number of induced paths is para-NP-hard.  Nonetheless, we have identified tractable instances that may be of use in practice, especially on orchard networks, and our results on unrooted phylogenetic networks pave the way for further exploration.  

Finally, throughout this paper we have encountered several problems that remain open,
as well as results which lead to some potential 
research directions. We conclude by summarising some of these: 
\begin{itemize}
    \item
    Recall that the \emph{level} of a network $N$ is the maximum number of reticulations in a biconnected component of $N$.  We observe that our difficult instances can have arbitrarily high levels.
    Is the forest-based recognition problem fixed-parameter tractable, when parameterized by the level of a network?  

    \item 
    Is the problem of finding a leaf IPP also NP-complete on networks with two leaves, but that are not required to be binary, in particular on networks of maximum total degree $4$?  

    \item 
    Is the forest-based recognition problem in P on superclasses of tree-child networks other than orchards, for instance 
    tree-sibling networks, reticulation-visible networks, or nearly stable networks?

    \item 
    Is it NP-complete to decide whether a \emph{binary} unrooted phylogenetic network is forest-based?

    \item In \cite{francis2018new} polynomial-time computable proximity-indices are introduced for measuring the
    extent to which an arbitrary binary phylogenetic network deviates from being tree-based. Unfortunately, in view of Theorem~\ref{thm:three-paths-hard}, 
    this approach does not directly extend to forest-based networks. Even so, it could still be interesting 
    to further study proximity measures for forest-based networks.

    \item There are interesting links between path partitions of digraphs and stable sets -- see e.g. \cite{sambinelli2022alpha}.
    It could
    interesting to study these concepts further for forest based networks.
    
\end{itemize}

\vspace{0.2cm}

\noindent{\bf Acknowledgement.}
The authors thank the Institute for Mathematical Sciences, National University of Singapore, 
for their invitation to attend the “Mathematics of Evolution - Phylogenetic Trees and Networks” program
in 2023, in which they began discussing the problems investigated in this paper.

\vspace{0.2cm}

\noindent{\bf Conflict of interest.}   The authors declare that they have no competing interests.

\bibliography{main}


\begin{thebibliography}{35}
\ifx \bisbn   \undefined \def \bisbn  #1{ISBN #1}\fi
\ifx \binits  \undefined \def \binits#1{#1}\fi
\ifx \bauthor  \undefined \def \bauthor#1{#1}\fi
\ifx \batitle  \undefined \def \batitle#1{#1}\fi
\ifx \bjtitle  \undefined \def \bjtitle#1{#1}\fi
\ifx \bvolume  \undefined \def \bvolume#1{\textbf{#1}}\fi
\ifx \byear  \undefined \def \byear#1{#1}\fi
\ifx \bissue  \undefined \def \bissue#1{#1}\fi
\ifx \bfpage  \undefined \def \bfpage#1{#1}\fi
\ifx \blpage  \undefined \def \blpage #1{#1}\fi
\ifx \burl  \undefined \def \burl#1{\textsf{#1}}\fi
\ifx \doiurl  \undefined \def \doiurl#1{\url{https://doi.org/#1}}\fi
\ifx \betal  \undefined \def \betal{\textit{et al.}}\fi
\ifx \binstitute  \undefined \def \binstitute#1{#1}\fi
\ifx \binstitutionaled  \undefined \def \binstitutionaled#1{#1}\fi
\ifx \bctitle  \undefined \def \bctitle#1{#1}\fi
\ifx \beditor  \undefined \def \beditor#1{#1}\fi
\ifx \bpublisher  \undefined \def \bpublisher#1{#1}\fi
\ifx \bbtitle  \undefined \def \bbtitle#1{#1}\fi
\ifx \bedition  \undefined \def \bedition#1{#1}\fi
\ifx \bseriesno  \undefined \def \bseriesno#1{#1}\fi
\ifx \blocation  \undefined \def \blocation#1{#1}\fi
\ifx \bsertitle  \undefined \def \bsertitle#1{#1}\fi
\ifx \bsnm \undefined \def \bsnm#1{#1}\fi
\ifx \bsuffix \undefined \def \bsuffix#1{#1}\fi
\ifx \bparticle \undefined \def \bparticle#1{#1}\fi
\ifx \barticle \undefined \def \barticle#1{#1}\fi
\bibcommenthead
\ifx \bconfdate \undefined \def \bconfdate #1{#1}\fi
\ifx \botherref \undefined \def \botherref #1{#1}\fi
\ifx \url \undefined \def \url#1{\textsf{#1}}\fi
\ifx \bchapter \undefined \def \bchapter#1{#1}\fi
\ifx \bbook \undefined \def \bbook#1{#1}\fi
\ifx \bcomment \undefined \def \bcomment#1{#1}\fi
\ifx \oauthor \undefined \def \oauthor#1{#1}\fi
\ifx \citeauthoryear \undefined \def \citeauthoryear#1{#1}\fi
\ifx \endbibitem  \undefined \def \endbibitem {}\fi
\ifx \bconflocation  \undefined \def \bconflocation#1{#1}\fi
\ifx \arxivurl  \undefined \def \arxivurl#1{\textsf{#1}}\fi
\csname PreBibitemsHook\endcsname

\bibitem[\protect\citeauthoryear{Kong et~al.}{2022}]{kong2022classes}
\begin{barticle}
\bauthor{\bsnm{Kong}, \binits{S.}},
\bauthor{\bsnm{Pons}, \binits{J.C.}},
\bauthor{\bsnm{Kubatko}, \binits{L.}},
\bauthor{\bsnm{Wicke}, \binits{K.}}:
\batitle{Classes of explicit phylogenetic networks and their biological and mathematical significance}.
\bjtitle{Journal of Mathematical Biology}
\bvolume{84}(\bissue{6}),
\bfpage{47}
(\byear{2022})
\end{barticle}
\endbibitem

\bibitem[\protect\citeauthoryear{Scholz et~al.}{2019}]{scholz2019osf}
\begin{barticle}
\bauthor{\bsnm{Scholz}, \binits{G.E.}},
\bauthor{\bsnm{Popescu}, \binits{A.-A.}},
\bauthor{\bsnm{Taylor}, \binits{M.I.}},
\bauthor{\bsnm{Moulton}, \binits{V.}},
\bauthor{\bsnm{Huber}, \binits{K.T.}}:
\batitle{{OSF}-builder: a new tool for constructing and representing evolutionary histories involving introgression}.
\bjtitle{Systematic Biology}
\bvolume{68}(\bissue{5}),
\bfpage{717}--\blpage{729}
(\byear{2019})
\end{barticle}
\endbibitem

\bibitem[\protect\citeauthoryear{Sneath}{1975}]{sneath1975cladistic}
\begin{barticle}
\bauthor{\bsnm{Sneath}, \binits{P.H.}}:
\batitle{Cladistic representation of reticulate evolution}.
\bjtitle{Systematic Zoology}
\bvolume{24}(\bissue{3}),
\bfpage{360}--\blpage{368}
(\byear{1975})
\end{barticle}
\endbibitem

\bibitem[\protect\citeauthoryear{Steel}{2016}]{steel2016phylogeny}
\begin{bbook}
\bauthor{\bsnm{Steel}, \binits{M.}}:
\bbtitle{Phylogeny: Discrete and Random Processes in Evolution}.
\bpublisher{SIAM},
\blocation{Philadelphia}
(\byear{2016})
\end{bbook}
\endbibitem

\bibitem[\protect\citeauthoryear{Francis and Steel}{2015}]{francis2015phylogenetic}
\begin{barticle}
\bauthor{\bsnm{Francis}, \binits{A.R.}},
\bauthor{\bsnm{Steel}, \binits{M.}}:
\batitle{Which phylogenetic networks are merely trees with additional arcs?}
\bjtitle{Systematic Biology}
\bvolume{64}(\bissue{5}),
\bfpage{768}--\blpage{777}
(\byear{2015})
\end{barticle}
\endbibitem

\bibitem[\protect\citeauthoryear{Jetten and van Iersel}{2016}]{jetten2016nonbinary}
\begin{barticle}
\bauthor{\bsnm{Jetten}, \binits{L.}},
\bauthor{\bsnm{Iersel}, \binits{L.}}:
\batitle{Nonbinary tree-based phylogenetic networks}.
\bjtitle{IEEE/ACM Transactions on Computational Biology and Bioinformatics}
\bvolume{15}(\bissue{1}),
\bfpage{205}--\blpage{217}
(\byear{2016})
\end{barticle}
\endbibitem

\bibitem[\protect\citeauthoryear{Francis et~al.}{2018}]{francis2018new}
\begin{barticle}
\bauthor{\bsnm{Francis}, \binits{A.}},
\bauthor{\bsnm{Semple}, \binits{C.}},
\bauthor{\bsnm{Steel}, \binits{M.}}:
\batitle{New characterisations of tree-based networks and proximity measures}.
\bjtitle{Advances in Applied Mathematics}
\bvolume{93},
\bfpage{93}--\blpage{107}
(\byear{2018})
\end{barticle}
\endbibitem

\bibitem[\protect\citeauthoryear{Huber et~al.}{2022}]{huber2022forest}
\begin{barticle}
\bauthor{\bsnm{Huber}, \binits{K.T.}},
\bauthor{\bsnm{Moulton}, \binits{V.}},
\bauthor{\bsnm{Scholz}, \binits{G.E.}}:
\batitle{Forest-based networks}.
\bjtitle{Bulletin of Mathematical Biology}
\bvolume{84}(\bissue{10}),
\bfpage{119}
(\byear{2022})
\end{barticle}
\endbibitem

\bibitem[\protect\citeauthoryear{Manuel}{2018}]{manuel2018revisiting}
\begin{botherref}
\oauthor{\bsnm{Manuel}, \binits{P.}}:
Revisiting path-type covering and partitioning problems.
arXiv preprint arXiv:1807.10613
(2018)
\end{botherref}
\endbibitem

\bibitem[\protect\citeauthoryear{Fernau et~al.}{2023}]{fernau2023parameterizing}
\begin{bchapter}
\bauthor{\bsnm{Fernau}, \binits{H.}},
\bauthor{\bsnm{Foucaud}, \binits{F.}},
\bauthor{\bsnm{Mann}, \binits{K.}},
\bauthor{\bsnm{Padariya}, \binits{U.}},
\bauthor{\bsnm{Rao}, \binits{K.R.}}:
\bctitle{Parameterizing path partitions}.
In: \bbtitle{International Conference on Algorithms and Complexity},
pp. \bfpage{187}--\blpage{201}
(\byear{2023}).
\bcomment{Springer}
\end{bchapter}
\endbibitem

\bibitem[\protect\citeauthoryear{Sambinelli et~al.}{2022}]{sambinelli2022alpha}
\begin{barticle}
\bauthor{\bsnm{Sambinelli}, \binits{M.}},
\bauthor{\bsnm{Silva}, \binits{C.N.}},
\bauthor{\bsnm{Lee}, \binits{O.}}:
\batitle{$\alpha$-diperfect digraphs}.
\bjtitle{Discrete Mathematics}
\bvolume{345}(\bissue{5}),
\bfpage{112759}
(\byear{2022})
\end{barticle}
\endbibitem

\bibitem[\protect\citeauthoryear{Huber et~al.}{In press}]{huber2023network}
\begin{botherref}
\oauthor{\bsnm{Huber}, \binits{K.T.}},
\oauthor{\bsnm{Iersel}, \binits{L.}},
\oauthor{\bsnm{Moulton}, \binits{V.}},
\oauthor{\bsnm{Scholz}, \binits{G.}}:
Is this network proper forest-based?
Information Processing Letters
(In press)
\end{botherref}
\endbibitem

\bibitem[\protect\citeauthoryear{Dehghan et~al.}{2015}]{dehghan2015complexity}
\begin{barticle}
\bauthor{\bsnm{Dehghan}, \binits{A.}},
\bauthor{\bsnm{Sadeghi}, \binits{M.-R.}},
\bauthor{\bsnm{Ahadi}, \binits{A.}}:
\batitle{On the complexity of deciding whether the regular number is at most two}.
\bjtitle{Graphs and Combinatorics}
\bvolume{31}(\bissue{5}),
\bfpage{1359}--\blpage{1365}
(\byear{2015})
\end{barticle}
\endbibitem

\bibitem[\protect\citeauthoryear{Impagliazzo et~al.}{2001}]{impagliazzo2001problems}
\begin{barticle}
\bauthor{\bsnm{Impagliazzo}, \binits{R.}},
\bauthor{\bsnm{Paturi}, \binits{R.}},
\bauthor{\bsnm{Zane}, \binits{F.}}:
\batitle{Which problems have strongly exponential complexity?}
\bjtitle{Journal of Computer and System Sciences}
\bvolume{63}(\bissue{4}),
\bfpage{512}--\blpage{530}
(\byear{2001})
\end{barticle}
\endbibitem

\bibitem[\protect\citeauthoryear{Antony et~al.}{2024}]{antony2024switching}
\begin{botherref}
\oauthor{\bsnm{Antony}, \binits{D.}},
\oauthor{\bsnm{Cao}, \binits{Y.}},
\oauthor{\bsnm{Pal}, \binits{S.}},
\oauthor{\bsnm{Sandeep}, \binits{R.}}:
Switching classes: Characterization and computation.
arXiv preprint arXiv:2403.04263
(2024)
\end{botherref}
\endbibitem

\bibitem[\protect\citeauthoryear{Erd{\H{o}}s et~al.}{2019}]{erdHos2019class}
\begin{barticle}
\bauthor{\bsnm{Erd{\H{o}}s}, \binits{P.L.}},
\bauthor{\bsnm{Semple}, \binits{C.}},
\bauthor{\bsnm{Steel}, \binits{M.}}:
\batitle{A class of phylogenetic networks reconstructable from ancestral profiles}.
\bjtitle{Mathematical Biosciences}
\bvolume{313},
\bfpage{33}--\blpage{40}
(\byear{2019})
\end{barticle}
\endbibitem

\bibitem[\protect\citeauthoryear{van Iersel et~al.}{2022}]{van2022orchard}
\begin{barticle}
\bauthor{\bsnm{Iersel}, \binits{L.}},
\bauthor{\bsnm{Janssen}, \binits{R.}},
\bauthor{\bsnm{Jones}, \binits{M.}},
\bauthor{\bsnm{Murakami}, \binits{Y.}}:
\batitle{Orchard networks are trees with additional horizontal arcs}.
\bjtitle{Bulletin of Mathematical Biology}
\bvolume{84}(\bissue{8}),
\bfpage{76}
(\byear{2022})
\end{barticle}
\endbibitem

\bibitem[\protect\citeauthoryear{Cardona et~al.}{2024}]{cardona2024comparison}
\begin{botherref}
\oauthor{\bsnm{Cardona}, \binits{G.}},
\oauthor{\bsnm{Pons}, \binits{J.C.}},
\oauthor{\bsnm{Ribas}, \binits{G.}},
\oauthor{\bsnm{Coronado}, \binits{T.M.}}:
Comparison of orchard networks using their extended $\mu$-representation.
IEEE/ACM Transactions on Computational Biology and Bioinformatics
(2024)
\end{botherref}
\endbibitem

\bibitem[\protect\citeauthoryear{Landry et~al.}{2023}]{landry2023fixed}
\begin{botherref}
\oauthor{\bsnm{Landry}, \binits{K.}},
\oauthor{\bsnm{Tremblay-Savard}, \binits{O.}},
\oauthor{\bsnm{Lafond}, \binits{M.}}:
A fixed-parameter tractable algorithm for finding agreement cherry-reduced subnetworks in level-1 orchard networks.
Journal of Computational Biology
(2023)
\end{botherref}
\endbibitem

\bibitem[\protect\citeauthoryear{Janssen and Murakami}{2021}]{janssen2021cherry}
\begin{barticle}
\bauthor{\bsnm{Janssen}, \binits{R.}},
\bauthor{\bsnm{Murakami}, \binits{Y.}}:
\batitle{On cherry-picking and network containment}.
\bjtitle{Theoretical Computer Science}
\bvolume{856},
\bfpage{121}--\blpage{150}
(\byear{2021})
\end{barticle}
\endbibitem

\bibitem[\protect\citeauthoryear{van Bevern et~al.}{2017}]{van2017fixed}
\begin{barticle}
\bauthor{\bsnm{Bevern}, \binits{R.}},
\bauthor{\bsnm{Bredereck}, \binits{R.}},
\bauthor{\bsnm{Chopin}, \binits{M.}},
\bauthor{\bsnm{Hartung}, \binits{S.}},
\bauthor{\bsnm{H{\"u}ffner}, \binits{F.}},
\bauthor{\bsnm{Nichterlein}, \binits{A.}},
\bauthor{\bsnm{Such{\`y}}, \binits{O.}}:
\batitle{Fixed-parameter algorithms for {DAG} partitioning}.
\bjtitle{Discrete Applied Mathematics}
\bvolume{220},
\bfpage{134}--\blpage{160}
(\byear{2017})
\end{barticle}
\endbibitem

\bibitem[\protect\citeauthoryear{Fortune et~al.}{1980}]{fortune1980directed}
\begin{barticle}
\bauthor{\bsnm{Fortune}, \binits{S.}},
\bauthor{\bsnm{Hopcroft}, \binits{J.}},
\bauthor{\bsnm{Wyllie}, \binits{J.}}:
\batitle{The directed subgraph homeomorphism problem}.
\bjtitle{Theoretical Computer Science}
\bvolume{10}(\bissue{2}),
\bfpage{111}--\blpage{121}
(\byear{1980})
\end{barticle}
\endbibitem

\bibitem[\protect\citeauthoryear{Tholey}{2012}]{tholey2012linear}
\begin{barticle}
\bauthor{\bsnm{Tholey}, \binits{T.}}:
\batitle{Linear time algorithms for two disjoint paths problems on directed acyclic graphs}.
\bjtitle{Theoretical Computer Science}
\bvolume{465},
\bfpage{35}--\blpage{48}
(\byear{2012})
\end{barticle}
\endbibitem

\bibitem[\protect\citeauthoryear{Kawarabayashi and Kobayashi}{2008}]{kawarabayashi2008induced}
\begin{bchapter}
\bauthor{\bsnm{Kawarabayashi}, \binits{K.-i.}},
\bauthor{\bsnm{Kobayashi}, \binits{Y.}}:
\bctitle{The induced disjoint paths problem}.
In: \bbtitle{Integer Programming and Combinatorial Optimization: 13th International Conference, IPCO 2008 Bertinoro, Italy, May 26-28, 2008 Proceedings 13},
pp. \bfpage{47}--\blpage{61}
(\byear{2008}).
\bcomment{Springer}
\end{bchapter}
\endbibitem

\bibitem[\protect\citeauthoryear{Slivkins}{2010}]{slivkins2010parameterized}
\begin{barticle}
\bauthor{\bsnm{Slivkins}, \binits{A.}}:
\batitle{Parameterized tractability of edge-disjoint paths on directed acyclic graphs}.
\bjtitle{SIAM Journal on Discrete Mathematics}
\bvolume{24}(\bissue{1}),
\bfpage{146}--\blpage{157}
(\byear{2010})
\end{barticle}
\endbibitem

\bibitem[\protect\citeauthoryear{B{\'e}rczi and Kobayashi}{2017}]{berczi2017directed}
\begin{bchapter}
\bauthor{\bsnm{B{\'e}rczi}, \binits{K.}},
\bauthor{\bsnm{Kobayashi}, \binits{Y.}}:
\bctitle{The directed disjoint shortest paths problem}.
In: \bbtitle{25th Annual European Symposium on Algorithms (ESA 2017)}
(\byear{2017}).
\bcomment{Schloss-Dagstuhl-Leibniz Zentrum f{\"u}r Informatik}
\end{bchapter}
\endbibitem

\bibitem[\protect\citeauthoryear{Lopes and Sau}{2022}]{lopes2022relaxation}
\begin{barticle}
\bauthor{\bsnm{Lopes}, \binits{R.}},
\bauthor{\bsnm{Sau}, \binits{I.}}:
\batitle{A relaxation of the directed disjoint paths problem: A global congestion metric helps}.
\bjtitle{Theoretical Computer Science}
\bvolume{898},
\bfpage{75}--\blpage{91}
(\byear{2022})
\end{barticle}
\endbibitem

\bibitem[\protect\citeauthoryear{Huber et~al.}{2022}]{huber2022overlaid}
\begin{barticle}
\bauthor{\bsnm{Huber}, \binits{K.T.}},
\bauthor{\bsnm{Moulton}, \binits{V.}},
\bauthor{\bsnm{Scholz}, \binits{G.E.}}:
\batitle{Overlaid species forests}.
\bjtitle{Discrete Applied Mathematics}
\bvolume{309},
\bfpage{110}--\blpage{122}
(\byear{2022})
\end{barticle}
\endbibitem

\bibitem[\protect\citeauthoryear{Cormen et~al.}{2022}]{cormen2022introduction}
\begin{bbook}
\bauthor{\bsnm{Cormen}, \binits{T.H.}},
\bauthor{\bsnm{Leiserson}, \binits{C.E.}},
\bauthor{\bsnm{Rivest}, \binits{R.L.}},
\bauthor{\bsnm{Stein}, \binits{C.}}:
\bbtitle{Introduction to Algorithms}.
\bpublisher{MIT press},
\blocation{Cambridge, Massachusetts}
(\byear{2022})
\end{bbook}
\endbibitem

\bibitem[\protect\citeauthoryear{Hopcroft and Karp}{1973}]{hopcroft1973n}
\begin{barticle}
\bauthor{\bsnm{Hopcroft}, \binits{J.E.}},
\bauthor{\bsnm{Karp}, \binits{R.M.}}:
\batitle{An $n^{\frac{5}{2}}$ algorithm for maximum matchings in bipartite graphs}.
\bjtitle{SIAM Journal on Computing}
\bvolume{2}(\bissue{4}),
\bfpage{225}--\blpage{231}
(\byear{1973})
\end{barticle}
\endbibitem

\bibitem[\protect\citeauthoryear{Aspvall et~al.}{1979}]{aspvall1979linear}
\begin{barticle}
\bauthor{\bsnm{Aspvall}, \binits{B.}},
\bauthor{\bsnm{Plass}, \binits{M.F.}},
\bauthor{\bsnm{Tarjan}, \binits{R.E.}}:
\batitle{A linear-time algorithm for testing the truth of certain quantified boolean formulas}.
\bjtitle{Information Processing Letters}
\bvolume{8}(\bissue{3}),
\bfpage{121}--\blpage{123}
(\byear{1979})
\end{barticle}
\endbibitem

\bibitem[\protect\citeauthoryear{Francis et~al.}{2018}]{francis2018tree}
\begin{barticle}
\bauthor{\bsnm{Francis}, \binits{A.}},
\bauthor{\bsnm{Huber}, \binits{K.T.}},
\bauthor{\bsnm{Moulton}, \binits{V.}}:
\batitle{Tree-based unrooted phylogenetic networks}.
\bjtitle{Bulletin of Mathematical Biology}
\bvolume{80},
\bfpage{404}--\blpage{416}
(\byear{2018})
\end{barticle}
\endbibitem

\bibitem[\protect\citeauthoryear{Hendriksen}{2018}]{hendriksen2018tree}
\begin{barticle}
\bauthor{\bsnm{Hendriksen}, \binits{M.}}:
\batitle{Tree-based unrooted nonbinary phylogenetic networks}.
\bjtitle{Mathematical Biosciences}
\bvolume{302},
\bfpage{131}--\blpage{138}
(\byear{2018})
\end{barticle}
\endbibitem

\bibitem[\protect\citeauthoryear{Arnold and Stadler}{2010}]{arnold2010polynomial}
\begin{barticle}
\bauthor{\bsnm{Arnold}, \binits{C.}},
\bauthor{\bsnm{Stadler}, \binits{P.F.}}:
\batitle{Polynomial algorithms for the maximal pairing problem: efficient phylogenetic targeting on arbitrary trees}.
\bjtitle{Algorithms for Molecular Biology}
\bvolume{5}(\bissue{1}),
\bfpage{1}--\blpage{10}
(\byear{2010})
\end{barticle}
\endbibitem

\bibitem[\protect\citeauthoryear{Le et~al.}{2003}]{le2003splitting}
\begin{barticle}
\bauthor{\bsnm{Le}, \binits{H.-O.}},
\bauthor{\bsnm{M{\"u}ller}, \binits{H.}}, \betal:
\batitle{Splitting a graph into disjoint induced paths or cycles}.
\bjtitle{Discrete Applied Mathematics}
\bvolume{131}(\bissue{1}),
\bfpage{199}--\blpage{212}
(\byear{2003})
\end{barticle}
\endbibitem

\end{thebibliography}

\end{document}